\newtheorem{lemma}{Lemma}
\newtheorem{definition}{Definition}
\newtheorem*{assumption}{Assumption}
\newtheorem{theorem}{Theorem}
\newtheorem{algo}{Algorithm}
\newtheorem{remark}{Remark}
\def\u{\textbf{\textit{u}}} \def\x{\textbf{\textit{x}}} \def\z{\textbf{\textit{z}}} \def\vv{\textbf{\textit{v}}}
\def\y{\textbf{\textit{y}}} \def\a{\textbf{\textit{a}}}  \def\c{\textbf{\textit{c}}}
  \def\0{\boldsymbol 0}       \def\w{\textbf{\textit{w}}}
\def\n{\textbf{\textit{n}}}
\newcommand{\bec}{\begin{center}}
\newcommand{\enc}{\end{center}}
\newcommand{\bee}{\begin{eqnarray*}}
\newcommand{\ene}{\end{eqnarray*}}
\newcommand{\beq}{\begin{equation}}
\newcommand{\eeq}{\end{equation}}
\newcommand{\X}{\mathcal X}                     
\newcommand{\Y}{\mathcal Y}                     
\newcommand{\Yout}{\mathcal{Y}_{\mathrm{out}}}  
\newcommand{\R}{\mathcal{R}}
\newcommand{\slab}[1]{\mathrm{Sl}\left(#1\right)}           
\newcommand{\cslab}[1]{\mathrm{cSl}\left(#1\right)}         
\newcommand{\dirs}{\mathcal T}          
\newcommand{\fdirs}{\mathcal U}         
\newcommand{\SHD}{\mathrm{sHD}}
\newcommand{\textSHD}{\textbf{sHD}}
\newcommand{\textSHDA}{\textbf{sHD}$_{\mathcal A}$}
\newcommand{\ccol}{purple}              
\newcommand{\stp}[1]{\textcolor{\ccol}{\textbf{Step~{#1}.}}} 
\newcommand{\stpc}[2]{\textcolor{\ccol}{\textbf{Step~{#1}: {#2}.}}} 
\newcommand{\stpf}[1]{\textcolor{\ccol}{\textbf{\hspace{0.5em}\ding{228} {#1}.}}}
\newcommand{\fl}[1]{\mathcal V_{#1}}    
\newcommand{\flc}{\mathcal V}           
\newcommand{\bmu}{\bm\mu}
\newcommand{\Sph}[1][d-1]{\mathcal{S}^{#1}}	
\newcommand{\tr}{^\top}                                     
\newcommand{\I}[1][d]{\mathbb I_{#1}}                       
\DeclareMathOperator{\linOp}{span}
\newcommand{\lin}[1]{\linOp\left(#1\right)}                 
\DeclareMathOperator{\intrOp}{int}			          		
\newcommand{\intr}[1]{\intrOp\left(#1\right)}
\DeclareMathOperator{\bdOp}{bd}			          		
\newcommand{\bd}[1]{\bdOp\left(#1\right)}
\newcommand{\pkg}[1]{{\normalfont\fontseries{b}\selectfont #1}}
\let\proglang=\textsf
\let\code=\texttt
\newcommand{\tru}{\code{TRUE}}
\newcommand{\fal}{\code{FALSE}}
\begin{document}

\title{\bf Exact and approximate computation of the scatter halfspace depth}
\vskip 5mm

\author {{Xiaohui Liu$^{a}$ ,
        Yuzi Liu$^{a}$,
        Petra Laketa$^{b}$,
        Stanislav Nagy$^{b,}$\footnote{Corresponding author's email: nagy@karlin.mff.cuni.cz.},
        Yuting Chen$^{a}$
        }\\ \\[1ex]
        {\em\footnotesize $^a$ School of Statistics, Jiangxi University of Finance and Economics, Nanchang, Jiangxi 330013, China}\\
        {\em\footnotesize $^b$ Faculty of Mathematics and Physics, Charles University, Prague, Czech Republic}\\
}

\maketitle

\textbf{Abstract}. The scatter halfspace depth (\textSHD{}) is an extension of the location half{\-}space (also called Tukey) depth that is applicable in the nonparametric analysis of scatter. Using \textSHD{}, it is possible to define minimax optimal robust scatter estimators for multivariate data. The problem of exact computation of \textSHD{} for data of dimension $d \geq 2$ has, however, not been addressed in the literature. We develop an exact algorithm for the computation of \textSHD{} in any dimension $d$ and implement it efficiently using \proglang{C++} for $d \leq 5$, and in \proglang{R} for any dimension $d \geq 1$. Since the exact computation of \textSHD{} is slow especially for higher dimensions, we also propose two fast approximate algorithms. All our programs are freely available in the \proglang{R} package \code{scatterdepth}.
\vspace{2mm}

{\small {\bf\itshape Key words:} Scatter halfspace depth; Exact computation; Approximate algorithm}
\vspace{2mm}

{\small {\bf2010 Mathematics Subject Classification Codes:} 65C60; 62G35; 62H12}

\setlength{\baselineskip}{1.5\baselineskip}

\vskip 0.1 in
\section{Introduction: Scatter halfspace depth and its computation}
\paragraph{}
\vskip 0.1 in \label{Introduction}

The location halfspace (or Tukey) depth is a powerful method of nonparametric analysis that is applicable to multivariate data \citep{Tukey1975, Donoho_Gasko1992, Chen_etal2018}. For a dataset $\X$ composed of observations $\x_1, \dots, \x_n \in \R^d$, the location halfspace depth of a point $\x \in \R^d$ with respect to (w.r.t.) the dataset $\X$ can be defined as the smallest number of data points lying in a closed halfspace whose boundary hyperplane passes through $\x$
    \begin{equation}\label{equation:halfspace depth} 
    \mathrm{HD}(\x, \X) = \inf_{\u \in \Sph} \# \left\{ i \in \left\{ 1, \dots, n \right\} \colon \u\tr \x_i \le \u\tr \x \right\}.
    \end{equation}
Here, $\Sph = \left\{ \u \in \R^d \colon \left\Vert \u \right\Vert = 1 \right\}$ is the unit sphere in $\R^d$, and $\# A$ stands for the number of elements of a finite set $A$. The unit vector $\u \in \Sph$ plays in~\eqref{equation:halfspace depth} the role of the outer normal vector to the halfspace whose mass we evaluate.

The location halfspace depth~\eqref{equation:halfspace depth} provides a natural data-dependent ordering in $\R^d$ --- points $\x \in \R^d$ positioned ``centrally'' w.r.t. the data cloud $\X$ obtain higher values of the depth~\eqref{equation:halfspace depth}, while peripheral points are distinguished by their low depth values. This ordering allows us to devise a theory of nonparametric statistics based on ranks in $\R^d$. A prime example is (the barycenter of) the set of maximizers of the depth function~\eqref{equation:halfspace depth} over $\x \in \R^d$. That collection of deepest points is frequently called the set of halfspace medians of $\X$, and presents a plausible analogue of the univariate median suitable for $\R^d$-valued observations \citep{Donoho_Gasko1992, Chen_etal2018}. Another prominent example of a depth-based method is the bagplot, which is a version of the boxplot applicable to multivariate data \citep{Rousseeuw_etal1999}. All these concepts are already firmly embedded in the nonparametrics of multivariate data, and a substantial progress has been made into their theory, practice, and computational aspects.

The location halfspace depth introduces elements of the first-order inference to $\R^d$; in \eqref{equation:halfspace depth} we indeed evaluate points $\x \in \R^d$ for their suitability as location estimators of $\X$. A next step is the analysis of the dispersion of $\X$. Expanding the scope of the location halfspace depth toward the scatter of multivariate data clouds, the scatter halfspace depth (\textSHD{}) has been proposed. \textSHD{} is based on the ideas of \cite{Zhang2002}, and was formally introduced by \cite{Chen_etal2018} and \cite{Paindaveine_VanBever2018}. For a given dataset $\X$ of $\R^d$-valued observations $\x_1, \dots, \x_n$ and a given centering vector $\bmu \in \R^d$, the scatter halfspace depth\footnote{Note that similarly as for the location halfspace depth, the general definition of the scatter halfspace depth in \cite{Chen_etal2018} and \cite{Paindaveine_VanBever2018} is given, instead of only for datasets $\X$, for Borel probability measures $P$ on $\R^d$. In that case, the number of elements used in~\eqref{equation:scatter depth} is replaced by the $P$-mass of appropriate sets. Strictly speaking, our definition~\eqref{equation:scatter depth} should be divided by $n$ to match the sample depth from the literature exactly.  It will, however, be convenient for us to consider the integer-valued version of the sample \textSHD{} as defined in~\eqref{equation:scatter depth}.The two versions of the depth are, of course, equivalent.} evaluates the depth of a positive definite matrix $\Sigma \in \R^{d \times d}$ w.r.t. $\X$ as
    \begin{equation}
    \label{equation:scatter depth}
    \begin{aligned}
    \SHD( & \Sigma, \bmu, \X) \\
    & = \inf_{\u \in \Sph} \min\left\{\# \left\{ i \colon (|\u\tr (\x_i - \bmu)| \le \sqrt{\u\tr \Sigma \u} \right\}, \# \left\{ i \colon |\u\tr (\x_i - \bmu)| \ge \sqrt{\u\tr \Sigma \u} \right\} \right\}.
    \end{aligned}
    \end{equation}
Of course, the index $i$ in \eqref{equation:scatter depth} takes values in $\left\{1, \dots, n \right\}$. A typical choice of the centering vector $\bmu \in \R^d$ is the halfspace median of the dataset $\X$. Throughout this paper we, however, leave $\bmu$ unspecified, and treat \textSHD{} with any positive definite matrix $\Sigma$ and any location vector $\bmu$. 

Analogously to the location halfspace depth~\eqref{equation:halfspace depth}, the scatter halfspace depth~\eqref{equation:scatter depth} assigns higher values to matrices that fit into the dispersion pattern generated by the dataset $\X$. The median scatter matrix (called also the scatter halfspace median) of $\X$ can be defined as (a representative of) the set of positive definite matrices that maximize the function~\eqref{equation:scatter depth} with $\bmu$ given. The scatter halfspace median is known to possess several excellent theoretical properties; in particular, with $\bm\mu$ the halfspace median, it has been shown to be a minimax optimal estimator of scatter in certain contamination models \citep{Chen_etal2018}. 

The location halfspace depth~\eqref{equation:halfspace depth} is known to be difficult to compute exactly. Nevertheless, great progress has already been made in that direction, and efficient exact algorithms for the computation of the location halfspace depth~\eqref{equation:halfspace depth} are currently available. We refer to the pioneering works of \cite{Rousseeuw_Ruts1996} for bivariate data, \cite{Rousseeuw_Struyf1998} for 3-dimensional data, and more recently \cite{Dyckerhoff_Mozharovskyi2016} for general dimension $d \geq 1$. The algorithm for bivariate data is based on the general idea of a circular sequence \citep{Edelsbrunner1987}; the situation with $d > 2$ is much more involved and requires specialized techniques \citep{Liu_Zuo2014, Dyckerhoff_Mozharovskyi2016, Liu_etal2019}.

Despite its theoretical appeal, efficient computation of \textSHD{} for $d \geq 2$ is virtually unexplored. The only available implementations of \textSHD{} are based on random approximations of~\eqref{equation:scatter depth}, by means of a finite minimum over randomly chosen directions $\u \in \Sph$. Such approximations are well known from the task of computing the location halfspace depth \citep{Dyckerhoff_etal2021}, and are notorious for their poor performance in higher dimensions (\citealp[see also our Section~\ref{Sec:Illustrations}]{Nagy_etal2020}). Unlike for the location halfspace depth, no exact programs are available for the computation of \textSHD{}. 

We address the problem of both exact and approximate computation of \textSHD{} in arbitrary dimension $d \geq 1$. Based on thorough geometric considerations, we develop an original computational algorithm for \textSHD{}. Our algorithm provides exact results under a mild condition of sufficiently general position of the data points $\X$, which is satisfied if $\X$ is sampled from a distribution with a density. In addition, we propose two versions of fast approximate algorithms for the computation of \textSHD{}. For dimension $d \leq 5$ we have efficiently implemented our program in \proglang{C++} and \proglang{R} via the interface given by \code{Rcpp} \citep{Eddelbuettel2013} and \code{RcppArmadillo} \citep{Eddelbuettel_Sanderson2014}. All these programs are freely available in an \proglang{R} package \code{scatterdepth} accompanying this paper.\footnote{\url{https://github.com/NagyStanislav/scatterdepth}}

The rest of this paper is organized as follows. The methodology and the main theoretical results are presented in Section \ref{Sec:MMS}. Based on that discussion, we develop both exact and approximate computational algorithms in Section \ref{Sec:algorithms}, and comment on their implementation. Two numerical studies are given in Section \ref{Sec:Illustrations} to investigate the computational speed, and to compare the approximate versions of the developed implementations. An extensive proof of our main theoretical result and several proofs of auxiliary lemmata are given in the Appendix.

\paragraph{Notations.} We make frequent use of tools and terminology from geometry; our general reference is \cite{Schneider2014}. For a set $A \subseteq \R^d$ its interior and boundary are denoted by $\intr{A}$ and $\bd{A}$, respectively. The set $A$ is an \emph{affine (sub)space} (also called a \emph{flat}) of dimension $k \leq d$ if $A = \left\{ \x + \a \colon \x \in L \right\}$ for some point $\a \in \R^d$ and some $k$-dimensional linear subspace $L$ of $\R^d$. The \emph{affine hull} of a set $A$ is the smallest affine space that contains $A$. For example, the affine hull of two distinct points in $\R^d$ is the infinite line spanned by them. The elements of the unit sphere $\Sph = \left\{ \x \in \R^d \colon \left\Vert \x \right\Vert = 1 \right\}$ are called \emph{directions}. We say that a set $A \subset \R^d$ is a \emph{$k$-sphere} if $k \in \left\{ 0, \dots, d-1 \right\}$ and if $A$ can be written as an intersection of a sphere $\left\{ \x \in \R^d \colon \left\Vert \x - \a \right\Vert = r \right\}$ for some $\a \in \R^d$ and $r > 0$ with a $k$-dimensional affine subspace of $\R^d$. In particular, a $0$-sphere is any pair of points in $\R^d$.

\vskip 0.1 in
\section{Methodology and main results}
\paragraph{}
\vskip 0.1 in 
\label{Sec:MMS}

Our algorithm is based on considerations from spherical geometry, coupled with elements of convex geometry, and matrix theory. It is described in several steps. We begin in Section~\ref{section:scatter halfspace depth}, where we interpret \textSHD{} in geometric terms. Our algorithm is guaranteed to give an exact result under a mild assumption of sufficiently general position of the data points, introduced in Section~\ref{section:general position}. To simplify the exposition, in Section~\ref{section:affine invariance} we use the affine invariance of \textSHD{} to reduce the problem of the computation of \textSHD{}. Instead of computing the depth for any positive definite matrix $\Sigma \in \R^{d \times d}$ centered at an arbitrary location $\bmu\in\R^d$, we use the equivalent formulation of finding \textSHD{} of the unit $d \times d$ matrix $\I$ centered at the origin $\bm0_d$, w.r.t. a transformed sample of data points. Having covered the preliminaries, we state our main result as Theorem~\ref{theorem:main} in Section~\ref{section:main theorem}, and outline the way how to use that result to develop our algorithm in Section~\ref{section:idea of algorithm}.

\subsection{Geometry of the scatter halfspace depth} \label{section:scatter halfspace depth}

The scatter halfspace depth has a geometric interpretation \citep{Nagy2020} in terms of probabilities of the slabs
    \begin{equation}\label{equation:general slab}
    \slab{\u, \Sigma, \bmu} = \{\x \in \R^d \colon |\u\tr (\x - \bmu)| \le \sqrt{\u\tr \Sigma \u}\}.
    \end{equation}
The slab $\slab{\u, \Sigma, \bmu}$ is located between two hyperplanes that are both \begin{enumerate*}[label=(\roman*)] \item orthogonal to $\u\in \Sph$, and \item supporting the (boundary of the) Mahalanobis ellipsoid of $\Sigma$ centered at $\bmu$\end{enumerate*}
    \begin{equation}\label{equation:Mahalanobis}
    E = \{\x \in \R^d \colon (\x - \bmu)\tr \Sigma^{-1} (\x - \bmu) = 1\}.
    \end{equation}
Denoting the closure of the complement of $\slab{\u, \Sigma, \bmu}$ in $\R^d$ by
    \[
    \cslab{\u, \Sigma, \bmu} = \{\x \in \R^d \colon |\u\tr (\x - \bmu)| \ge \sqrt{\u\tr \Sigma \u}\},
    \]
we can rewrite~\eqref{equation:scatter depth} as
    \[  
    \SHD(\Sigma,\bmu, \X) = \inf_{\u \in \Sph} \min\{ \# \left\{ i \colon \x_i \in \slab{\u, \Sigma, \bmu} \right\}, \# \left\{ i \colon \x_i \in \cslab{\u, \Sigma, \bmu} \right\} \}.
    \]

\subsection{The assumption of general position w.r.t. ellipsoids} \label{section:general position}

To avoid clunky notation coupled with technical and numerical difficulties, throughout this paper we assume that the data $\X$ is in position that is sufficiently general. Recall that a (multi)set $\mathcal Z$ of points in $\R^d$ is said to be \emph{in general position} if no $k$-point subset of $\mathcal Z$ lies in a $(k-2)$-dimensional affine space, for any $k \in \left\{ 2, \dots, d+1 \right\}$. The assumption of data points lying in general position is quite standard in the literature on the computation of the (location) depth. 

For the scatter halfspace depth of a dataset $\X$ we need to assume a slightly different condition that we call general position w.r.t. the (Mahalanobis) ellipsoid $E$ from \eqref{equation:Mahalanobis}. To formulate our condition, denote for $\u \in \Sph$ by $H_{\u}$ the boundary hyperplane of the unique tangent halfspace\footnote{A halfspace $H$ is said to be tangent to $E$ if $H \cap E$ is a single point.} to $E$ with inner normal $\u$. 

\begin{definition}  \label{definition:general position}
Let $\Sigma \in \R^{d \times d}$ be positive definite and let the center $\bmu \in \R^d$ be given. For a dataset $\X$ of points $\x_1, \dots, \x_n \in \R^d$, consider the amended dataset $\X^*$ of $2\,n$ points consisting of the original data $\x_1, \dots, \x_n$ and the points $2\,\bmu-\x_1, \dots,2\,\bmu -\x_n$ reflected around the center $\bmu$. We say that $\X$ is \emph{in general position w.r.t. the ellipsoid~\eqref{equation:Mahalanobis}} if for each $\u \in \Sph$ the finite (multi)set $\left( H_{\u} \cap E \right) \cup \left\{ H_{\u} \cap \X^* \right\}$ of points inside the hyperplane $H_{\u}$ is in general position.
\end{definition}

\begin{assumption}
Throughout this paper we assume that the dataset $\X$ is in general position w.r.t. $E$ from~\eqref{equation:Mahalanobis}.
\end{assumption}

In Definition~\ref{definition:general position} we assume that the hyperplane $H_{\u}$ always contains at most $d-1$ points from $\X^*$, and when taken together with the unique point $H_{\u} \cap E$ on the boundary of $E$, these points are in general position inside $H_{\u}$. To phrase our condition in other words, we suppose that the union of the two hyperplanes $H_{\u}$ and $H_{-\u}$, which is the boundary of the slab $\slab{\u, \Sigma, \bmu}$ of $E$ from \eqref{equation:general slab}, contains at most $d-1$ data points, and if the points from the hyperplane $H_{-\u}$ are flipped around the center $\bmu$ of $E$ to $H_{\u}$, all the data points in $H_{\u}$ together with $H_{\u} \cap E$ are in general position. In particular, we assume that \begin{enumerate*}[label=(\roman*)] \item no data point lies exactly on the boundary of $E$, \item no two data points determine a straight line that touches $E$, etc. \end{enumerate*} Our Definition~\ref{definition:general position} is similar to the usual assumption of general position in $\R^d$, but it neither implies nor is implied by the latter. The assumption of general position w.r.t. $E$ is almost surely satisfied for random samples $\X$ from absolutely continuous distributions in $\R^d$.

\subsection{Affine invariance} \label{section:affine invariance}

The scatter halfspace depth is affine invariant \cite[Theorem~2.1]{Paindaveine_VanBever2018}, meaning that the task of computing $\SHD(\Sigma, \bmu, \X)$ can be reduced to finding the depth of the identity matrix $\I \in \R^{d \times d}$ w.r.t. an affine transform of the original dataset $\X$. Indeed, because $\Sigma \in \R^{d \times d}$ is positive definite, there exists a unique symmetric matrix $\Sigma^{-1/2} \in \R^{d \times d}$ with the property $\Sigma^{-1/2} \Sigma \Sigma^{-1/2} = \I$ \citep[Theorem~7.2.6]{Horn_Johnson2013}. Transforming $\y_i = \Sigma^{-1/2}\left( \x_i - \bmu\right)$, $i=1,\dots, n$, and denoting by $\Y$ the dataset of the transformed points $\y_1, \dots, \y_n$, we obtain $\SHD\left(\Sigma, \bmu, \X \right) = \SHD\left( \I, \bm 0_d, \Y \right)$. The corresponding Mahalanobis ellipsoid~\eqref{equation:Mahalanobis} reduces to the unit sphere $\Sph$, and the slab~\eqref{equation:general slab} and its complement simplify to
    \[
    \slab{\u, \I, \bm 0_d}  = \{\x \in \R^d \colon |\u\tr \x| \le 1\}, \quad \mbox{ and }\quad
    \cslab{\u, \I, \bm 0_d}  = \left\{ \x \in \R^d \colon | \u\tr \x | \geq 1 \right\}, 
    \]
with their two boundary hyperplanes at $\u$ and $-\u$
    \begin{equation}\label{equation:Hu} 
    H_{\u} = \left\{ \x \in \R^d \colon \x\tr\u = 1 \right\} \quad \mbox{ and }\quad  H_{-\u} = - H_{\u} = \left\{ \x \in \R^d \colon \x\tr\u = -1 \right\}
    \end{equation}
supporting the unit sphere $\Sph$. As a result, the computation of $\SHD(\I, \bm 0_d, \Y)$ is easier to describe, and throughout this section we therefore focus on the situation when $\Sigma = \I$, $\bmu = \bm 0_d$, and the dataset $\Y$. This allows us to simplify the notations and the geometric arguments that will follow. We write $\SHD(\Y)$ for $\SHD\left(\I, \bm 0_d, \Y\right)$, and  $\slab{\u}, \cslab{\u}$ instead of $\slab{\u, \I, \bm 0_d}, \cslab{\u, \I, \bm 0_d}$, respectively. Likewise, the assumption of general position w.r.t. $E$ from Definition~\ref{definition:general position} simplifies, as $E=\Sph$, and $H_{\u} \cap E  = \left\{\u\right\}$ for all $\u \in \Sph$. The general position of $\X$ w.r.t. $E$ from~\eqref{equation:Mahalanobis} is equivalent with the general position of $\Y$ w.r.t. $\Sph$.

Thanks to the affine invariance, the computation of the scatter halfspace depth reduces to the minimization of the objective function
    \begin{equation}\label{equation:objective function}
    \begin{aligned}
    h(\u) & = \min\left\{ \#\left\{ i \in \{ 1, \dots, n\} \colon \left\vert \u\tr \y_i \right\vert \leq 1 \right\}, \#\left\{ i \in \{ 1, \dots, n\} \colon \left\vert \u\tr \y_i \right\vert \geq 1 \right\} \right\} \\
    & = \min\left\{\#\left\{ i \in \{ 1, \dots, n\} \colon \y_i \in \slab{\u} \right\}, \#\left\{ i \in \{ 1, \dots, n\} \colon \y_i \in \cslab{\u} \right\}\right\}
    \end{aligned}
    \end{equation}
over all $\u \in \Sph$. The main idea of our computational algorithm is a significant reduction of the domain $\Sph$ of the function $h$ to a finite subset $\fdirs$ with a guarantee that the infimum of the function $h$ over $\Sph$ coincides with the minimum of (a slightly modified version of) $h$ over $\fdirs$.

\subsection{Main theorem} \label{section:main theorem}

\begin{definition}  \label{definition:MTH}
Denote by $\Y^*$ the dataset of $2\,n$ points $\y_1, \dots, \y_n, -\y_1, \dots, -\y_n$. We say that $H_{\u}$ from~\eqref{equation:Hu} is a \emph{maximal tangent hyperplane} if there is no direction $\vv\in\Sph$ such that $H_{\u}\cap \Y^*$ is a strict subset of $H_{\vv}\cap \Y^*$. The direction $\u \in \Sph$ corresponding to a maximal tangent hyperplane is called a \emph{maximal direction}, and the finite set $\Y^* \cap H_{\u}$ is called a \emph{maximal subset} of $\Y^*$.
\end{definition}

To phrase Definition~\ref{definition:MTH} differently, a maximal tangent hyperplane $H_{\u}$ cannot be modified in a way to obtain a different tangent hyperplane of $\Sph$ that would contain the same points as $H_{\u} \cap \Y^*$, and some additional points from $\Y^*$. We are now ready to state our main theorem.

\begin{theorem} \label{theorem:main}
Let $\Y$ be a dataset in general position w.r.t. the unit sphere $\Sph$. Let $\fdirs \subset \Sph$ be a finite collection of directions such that 
    \begin{itemize}
        \item for each maximal subset $\left\{ \a_1, \dots, \a_{d-1} \right\}$ of $\Y^*$ of $d-1$ points, both corresponding maximal directions are contained in $\fdirs$, and
        \item for each maximal subset $\left\{ \a_1, \dots, \a_{k} \right\}$ of $\Y^*$ of $k < d-1$ points, any single direction from the corresponding set of maximal directions is contained in $\fdirs$.
    \end{itemize}
Then we can write
    \begin{equation}\label{equation:main theorem}
    \SHD(\Y) = \min_{\u \in \fdirs} \min\left\{ \#\left\{ i \in \{ 1, \dots, n\} \colon \left\vert \u \tr \y_i \right\vert < 1 \right\}, \#\left\{ i \in \{ 1, \dots, n\} \colon \left\vert \u \tr \y_i \right\vert > 1 \right\} \right\}.  
    \end{equation}
\end{theorem}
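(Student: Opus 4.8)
The plan is to reformulate $\SHD(\Y)$ through the \emph{strict} objective
\[
  g(\u) \;=\; \min\bigl\{\, \#\{ i \colon |\u\tr\y_i| < 1 \},\ \#\{ i \colon |\u\tr\y_i| > 1 \} \,\bigr\},\qquad \u\in\Sph,
\]
so that the right-hand side of~\eqref{equation:main theorem} is exactly $\min_{\u\in\fdirs}g(\u)$, while $\SHD(\Y)=\inf_{\u\in\Sph}h(\u)$ for the objective $h$ of~\eqref{equation:objective function}, and $g\le h$ pointwise. I would then prove two things: (a) $\SHD(\Y)=\min_{\u\in\Sph}g(\u)$, i.e.\ the strict reformulation is exact and its optimal value is attained; and (b) this minimum is already attained at some $\u\in\fdirs$.

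For (a) I would first observe that $g$ is lower semicontinuous, being the minimum of two finite sums of indicators of open conditions, so $g$ attains its minimum on the compact set $\Sph$; since $\inf_\Sph g\le\inf_\Sph h=\SHD(\Y)$, it then remains to prove the reverse pointwise bound $g(\u)\ge\SHD(\Y)$ for every $\u\in\Sph$. This I would obtain by a perturbation. Write $\{\a_1,\dots,\a_m\}=H_{\u}\cap\Y^*$ (with $m\le d-1$ by the general position assumption) and let $A_<,A_>$ count the data indices with $|\u\tr\y_i|<1$, resp.\ $>1$, so $g(\u)=\min\{A_<,A_>\}$; say $A_<\le A_>$. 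The vectors $\a_j-\u$ lie in $\u^\perp$ and are linearly independent --- this is precisely the general position of $\{\u\}\cup(H_{\u}\cap\Y^*)$ inside $H_{\u}$ --- so one may choose $\w\in\u^\perp$ with $(\a_j-\u)\tr\w>0$ for all $j$ and move $\u$ slightly to $\u_t=(\u+t\w)/\lVert\u+t\w\rVert$, $t>0$ small, at which $\a_j\tr\u_t>1$ for every $j$ while no point of $\Y^*$ lies on $H_{\u_t}\cup H_{-\u_t}$; then $h(\u_t)=\min\{A_<,A_>+m\}=A_<=g(\u)$, whence $g(\u)\ge h(\u_t)\ge\SHD(\Y)$ (symmetrically with $-\w$ if $A_>\le A_<$). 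Informally: pushing the at most $d-1$ points of $\Y^*$ on the boundary of the slab to whichever side already carries more data does not decrease the slab functional, so the strict count at $\u$ is realised by a genuine value of the functional and is therefore $\ge\SHD(\Y)$. This already gives $\min_{\u\in\fdirs}g(\u)\ge\SHD(\Y)$.

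For (b) I would pick $\u_0\in\Sph$ with $g(\u_0)=\SHD(\Y)$, set $A_0=H_{\u_0}\cap\Y^*$, and run a descent that never increases $g$. Its engine is a monotonicity statement: if $\u_1$ lies in the closure in $\Sph$ of the connected component of $\{\u\colon H_{\u}\cap\Y^*=A_0\}$ containing $\u_0$, then $A_0\subseteq H_{\u_1}$ (the defining equalities are closed) and every data index not frozen on $H_{\u_1}$ keeps the side it had at $\u_0$ (by continuity along that connected set), so $g(\u_1)\le g(\u_0)$. Now if $A_0$ is a maximal subset in the sense of Definition~\ref{definition:MTH}: when $|A_0|=d-1$, $\u_0$ is a maximal direction and hence $\u_0\in\fdirs$ by the first bullet; when $|A_0|<d-1$, maximality of $A_0$ makes $\mathbb S_{A_0}:=\{\u\colon A_0\subseteq H_{\u}\}$ coincide with $\{\u\colon H_{\u}\cap\Y^*=A_0\}$, and the general position assumption makes $\mathbb S_{A_0}$ a nondegenerate --- hence connected --- $(d-1-|A_0|)$-sphere on which $g\equiv\SHD(\Y)$, so the second bullet supplies a direction of $\mathbb S_{A_0}$ lying in $\fdirs$. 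If instead $A_0$ is not maximal, then necessarily $|A_0|<d-1$, $\mathbb S_{A_0}$ is connected, and $\{\u\colon H_{\u}\cap\Y^*=A_0\}$ is a proper open subset of it; hence the component of $\u_0$ has non-empty boundary in $\mathbb S_{A_0}$, and moving to a boundary point $\u_1$ yields a direction whose frozen subset properly contains $A_0$ and at which, by the monotonicity, $g$ still equals $\SHD(\Y)$. Since the frozen subset strictly grows at each such step and is bounded in size by $d-1$, finitely many steps land at a direction with a maximal frozen subset and $g=\SHD(\Y)$, which is then handled as above. In all cases $\min_{\u\in\fdirs}g(\u)\le\SHD(\Y)$, and together with (a) this is~\eqref{equation:main theorem}.

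The routine pieces are the semicontinuity remark and the single perturbation in part (a). I expect the main obstacle to be part (b): making the descent rigorous --- tracking how the strict counts vary as a direction moves and gains or sheds boundary points, and verifying that the loci $\{\u\colon H_{\u}\cap\Y^*=A\}$ and the spheres $\mathbb S_A$ have the claimed structure (nondegeneracy, connectedness, existence of a boundary face to descend into), all of which lean on the general position assumption w.r.t.\ $\Sph$ --- and then matching the two bullet conditions defining $\fdirs$ precisely to the dichotomy $|A_0|=d-1$ versus $|A_0|<d-1$.
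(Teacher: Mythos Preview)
Your proposal is correct and rests on the same two ingredients as the paper: a tilting/perturbation argument showing that the strict objective $g$ computes $\SHD(\Y)$ (your part~(a) is precisely the content of the paper's Lemma~\ref{lemma:tilting}), and the fact that intersections of $k$ (anti-)circles are connected $(d-1-k)$-spheres under general position (the paper's Lemma~\ref{lemma:intersection}), which underlies your descent in~(b). The organization differs: the paper introduces the shell decomposition of $\Sph$, proves that every shell carries a maximal direction on its spherical boundary (Lemma~\ref{lemma:MTH}), and then invokes~\eqref{equation:computation formula 2}; you instead pick a single global minimizer of $g$ and descend along the spheres $\mathbb S_{A_0}$ until the frozen set is maximal. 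Your route is slightly more economical --- it avoids naming shells and works directly with one optimal direction --- while the paper's shell viewpoint makes the finite combinatorial structure of $\Sph$ explicit and motivates the algorithmic loop over $(d-1)$-tuples. Both arguments need, and you correctly flag, the nondegeneracy/connectedness of $\mathbb S_{A}$ from general position as the point requiring care.
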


The proof of Theorem~\ref{theorem:main} is technical and long; in full detail it is contained in the Appendix, Section~\ref{section:proof main}. 

\begin{remark}  \label{remark:symmetry}
Since the objective function to be minimized in~\eqref{equation:main theorem} is symmetric w.r.t. antipodal reflections of directions $\Sph \to \Sph \colon \u \mapsto -\u$, it is possible to restrict the set of directions $\fdirs$ to those contained in an arbitrary hemisphere of $\Sph$ in~\eqref{equation:main theorem}. This argument of symmetry will be used to simplify our main algorithm in Section~\ref{Sec:algorithms}.
\end{remark}

There are finitely many directions in the set $\fdirs$ in Theorem~\ref{theorem:main}. They can be found using the notion of circles and anti-circles of data points from $\Y$, dealt with in detail in Section~\ref{section:shells} in the Appendix. Let $\y_i \in \Y$ be a data point such that $\left\Vert \y_i \right\Vert > 1$. The \emph{circle} of $\y_i$ is the set 
    \begin{equation}\label{equation:circle}
    \Sph \cap \left\{ \x \in \R^d \colon \x\tr \y_i = 1 \right\}.
    \end{equation}
The \emph{anti-circle} of $\y_i$ is the circle of its antipodal reflection $-\y_i \in \Y^*$. Each (anti-)circle,\footnote{By (anti-)circle we mean \emph{circle and/or anti-circle}, with precise meaning obvious from the context.} being an intersection of the unit sphere $\Sph$ and a hyperplane, is a $(d-2)$-sphere lying inside the unit sphere $\Sph$, see Figures~\ref{fig:d2} and~\ref{fig:d3}. The circle is a $1$-sphere (or a circle in the sense of a simple closed curve) on the surface of $\Sph[2]$ only for $d=3$; in the sequel it will, however, be convenient to use this terminology.\footnote{To avoid ambiguous terminology, we reserve the term \emph{circle} only to this context, and call $\Sph[1]$ consistently the (unit) \emph{$1$-sphere} throughout this paper.} The connections between (anti-)circles of data points and maximal tangent hyperplanes are expounded in Section~\ref{section:maximal tangent hyperplanes} in the Appendix. In brief, a maximal direction is any direction that lies in an intersection $C$ of $k \in \left\{ 1, \dots, d-1 \right\}$ (anti-)circles that have the property that no $(k+1)$-st (anti-)circle intersects $C$. In dimension $d=2$, the maximal directions are exactly the points of (any) (anti-)circles. For dimensions $d=3$ and $d=4$ see Figures~\ref{fig:d3} and~\ref{fig:d4}, respectively.

\subsection{The main algorithm: General description}    \label{section:idea of algorithm}

Based on Theorem~\ref{theorem:main}, we design our main algorithm for the exact computation of the scatter halfspace depth.

\renewcommand{\algorithmcfname}{Main algorithm}
\renewcommand{\thealgocf}{}
\begin{algorithm}
	\caption{The procedure for finding the exact scatter halfspace depth \textSHD{} of $\Sigma \in \R^{d \times d}$ positive definite, centered at $\bmu \in \R^d$, w.r.t. $\x_1, \dots, \x_n \in \R^d$.}
	\begin{enumerate}[label=\textbf{(S$_{\arabic*}$)}, ref=\upshape{\textbf{(S$_{\arabic*}$)}}]
	    \item \label{S1} Transform $\y_i = \Sigma^{-1/2}(\x_i - \bmu)$, $i=1,\dots,n$, and take the amended dataset $\Y^*$ consisting of $\y_1, \dots, \y_n, -\y_1, \dots, -\y_n$. 
		\item \label{S2} Initialize the list of maximal directions $\fdirs=\emptyset$.
		\item \label{S3} Loop through all $k$-point subsets $\{\a_1,\dots,\a_k\}$ of $\Y^*$ such that $\a_{j_1} \ne \pm \a_{j_2}$ for all $j_1 \ne j_2$, $j_1, j_2 \in \left\{ 1, \dots, k \right\}$, and all $k \in \left\{ 1, \dots, d-1 \right\}$. That is
		
		\textbf{For each} $k \in \left\{ 1, \dots, d-1 \right\}$ and \textbf{for each} $\{\a_1,\dots,\a_k\} \subset \Y^*$ as above \textbf{do}
		
		\begin{itemize}
			\item[\ding{228}] Search for any maximal tangent hyperplane $H_{\u}$ that satisfies $\{\a_1,\dots, \a_k\}=H_{\u}\cap \Y^*$.
			\item[\ding{228}] \textbf{If} such a maximal tangent hyperplane $H_{\u}$ does not exist, \textbf{then} break out of the current loop and continue with the next subset in Step~\ref{S3}.
			\item[\ding{228}] \textbf{Else}, we distinguish two cases:
		        \begin{itemize}
		            \item[\ding{222}] \textbf{If} $k=d-1$, \textbf{then} we find both directions $\u$ and $\u'$ such that $H_{\u} \cap \Y^* = H_{\u'} \cap \Y^* = \{\a_1,\dots,\a_{d-1}\}$, and add both $\u$ and $\u'$ to $\fdirs$.
		            \item[\ding{222}] \textbf{If} $k<d-1$, \textbf{then} we take an arbitrary single direction $\u\in\Sph$ that satisfies $H_{\u}\cap \Y^*=\{\a_1,\dots,\a_k\}$ and add $\u$ to $\fdirs$.
		        \end{itemize}
		\end{itemize}
		\item \label{S4} \textbf{Return} \textSHD{} as~\eqref{equation:main theorem} from Theorem~\ref{theorem:main} with the given set $\fdirs$.
	\end{enumerate}
\end{algorithm}

In dimension $d=2$ our algorithm simplifies since in Step~\ref{S3} the case $k < d-1 = 1$ is never considered. The algorithm thus in the main loop in Step~\ref{S3} reduces to the search of two tangent hyperplanes (lines) to $\Sph[1]$ from all points of the amended dataset $\Y^*$, see also Figure~\ref{fig:d2}. 

For $d=3$, the algorithm searches through all pairs of points in Step~\ref{S3} with $k = d-1 = 2$ and their maximal directions, and then with $k = d-2 = 1$ it considers also data points from $\Y^*$ whose (anti-)circles do not intersect any other (anti-)circle. This situation is displayed in Figures~\ref{fig:d3} and~\ref{fig:circles}.

With $d>3$ the procedure is substantially more involved, as all cases of $k \in \left\{ 1, \dots, d-1 \right\}$ (anti-)circles must be treated in Step~\ref{S3}. For a hint with $d=4$ see Figure~\ref{fig:d4}. The technical implementation of our procedure and details of the computation are provided in Section~\ref{Sec:algorithms}. 

\begin{figure}[htpb]
    \centering
    \includegraphics[width=.85\textwidth]{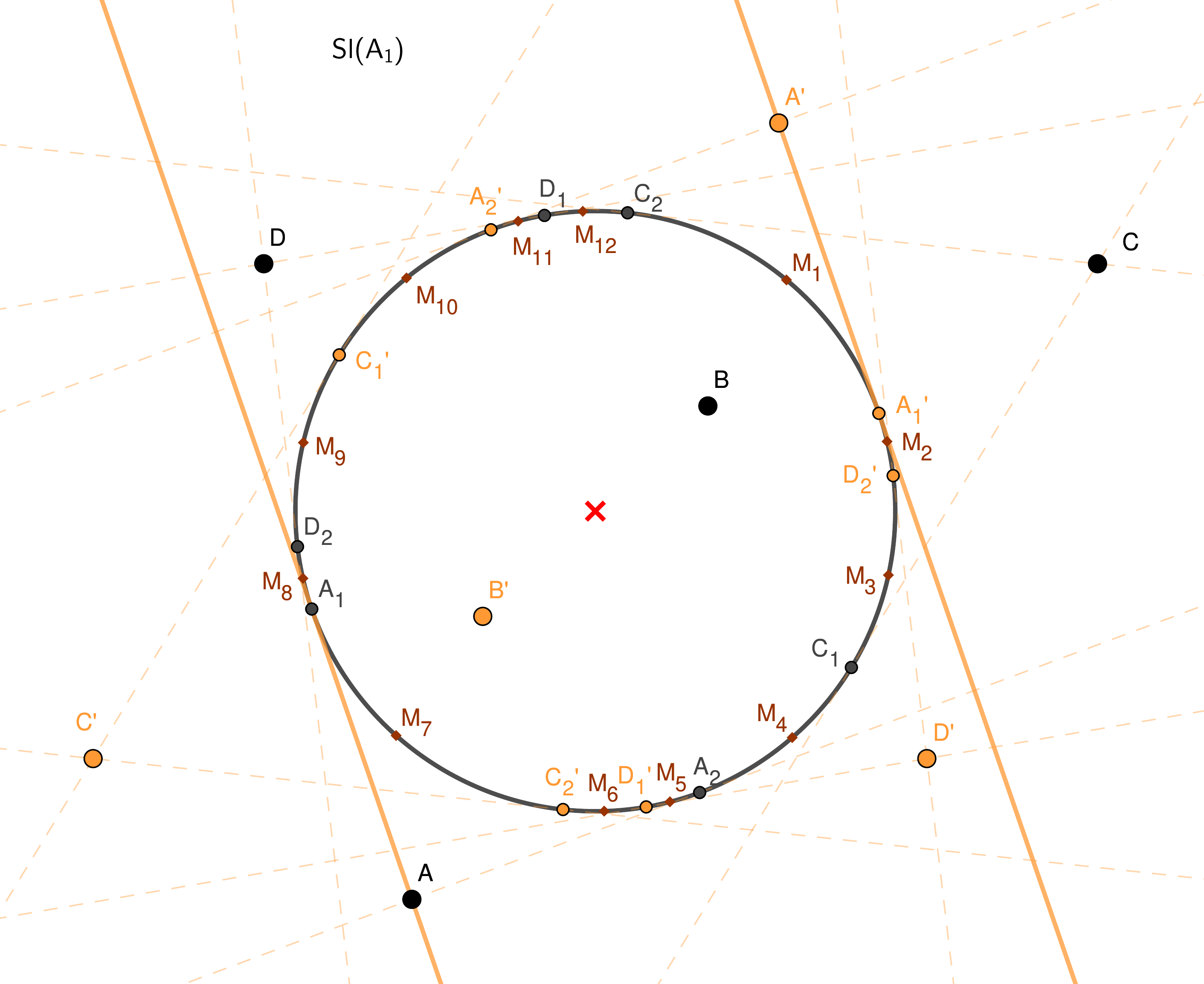}
    \caption{The situation in dimension $d=2$: 
    A dataset $\Y$ of $n = 4$ points (black points labeled $A$--$D$) and their antipodal reflections (orange points labeled $A'$--$D'$). The circles of $\Y$ are displayed as the points on $\Sph[1]$ in black ($A_1, A_2, C_1, C_2, D_1, D_2$); the anti-circles of $\Y$ are the points on $\Sph[1]$ in orange ($A_1', A_2', C_1', C_2', D_1', D_2'$). According to Theorem~\ref{theorem:main}, the depth \textSHD{} can be computed using slabs $\slab{\u}$ with $\u$ taken from the (anti-)circles of $\Y$ only, if the points on the boundaries of $\slab{\u}$ are disregarded. One such slab $\slab{\u}$ is displayed with $\u = A_1$; this slab contains $2$ data points in its interior ($B$ and $D$), and $1$ data point outside of it ($C$). Thus, to the minimum in Step~\ref{S4} of our algorithm it contributes by value $1$. The smaller brown points $M_1$ -- $M_{12}$ on $\Sph[1]$ are the midpoints of the sequence of the ordered (anti-)circles, used in the second variant of our algorithm explained in Remark~\ref{remark:d2 algorithm} in Section~\ref{section:bivariate algorithm}.}
    \label{fig:d2}
\end{figure}

\begin{figure}[htpb]
    \centering
    \includegraphics[width=.9\textwidth]{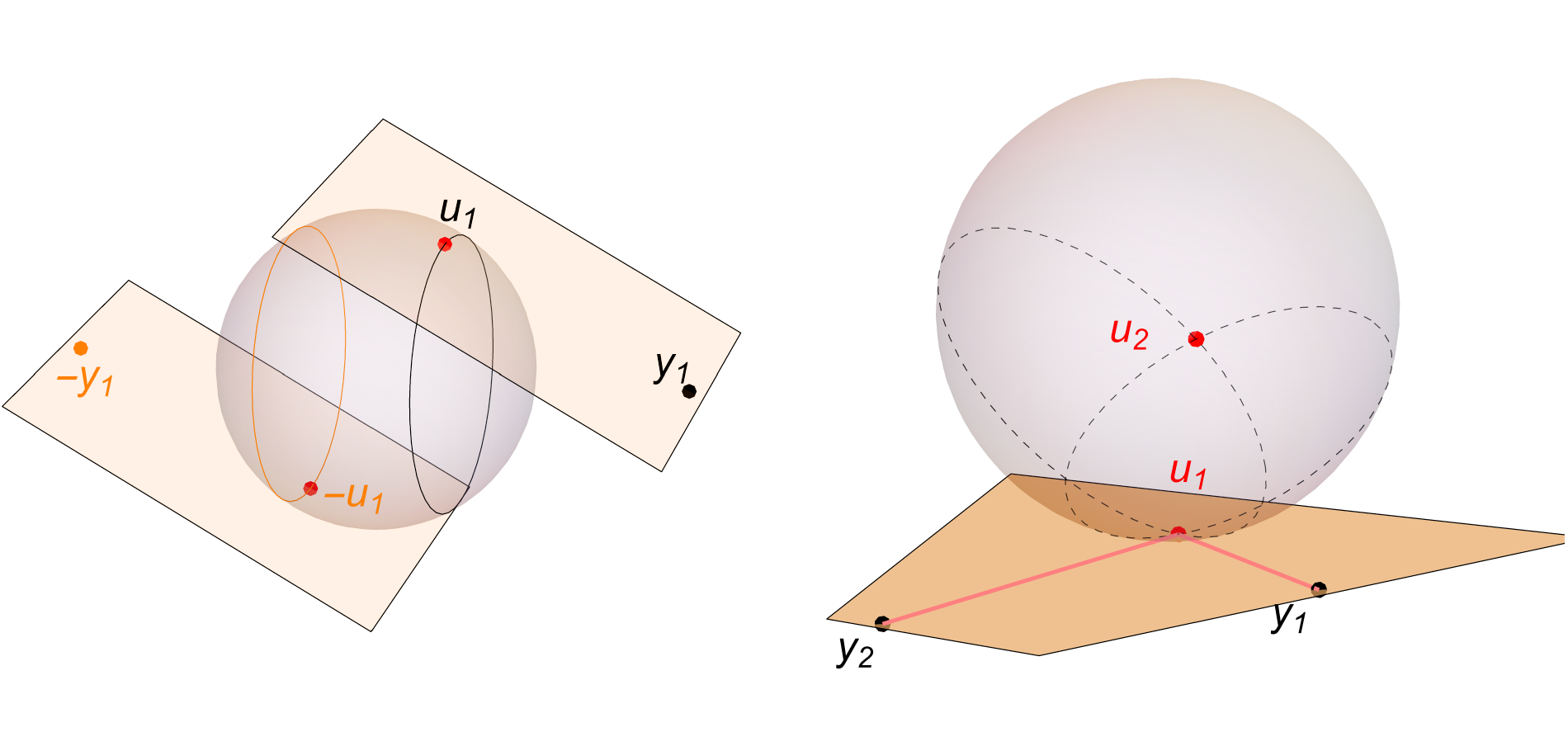}
    \caption{The situation in dimension $d=3$: Left: A single data point $\y_1 = \left(2,0,0\right)\tr$ (black) and its antipodal reflection $-\y_1$ (orange). The circle (black curve) and the anti-circle (orange curve) of $\y_1$ are both $1$-spheres, that is simple curves, on the surface of $\Sph[2]$. The tangent planes of $\Sph[2]$ at the points of the circle pass through $\y_1$, while the tangent planes at the points of anti-circle pass through $-\y_1$. Two such tangent planes are displayed; they correspond to $\u_1$ and $-\u_1$, respectively. Right: Two data points $\y_1$ and $\y_2$ (black points) with their circles (dashed curves) on the unit sphere $\Sph[2]$. The circles intersect in two maximal directions (red points) $\u_1, \u_2 \in \Sph[2]$. These directions correspond to the two maximal tangent (hyper)planes of $\Sph[2]$ that pass through both $\y_1$ and $\y_2$. The tangent plane $H_{\u_1}$ is displayed along with the lines from $\u_1$ to $\y_i$, $i=1,2$, inside $H_{\u_1}$.}
    \label{fig:d3}
\end{figure}

\begin{figure}
    \includegraphics[width=.9\textwidth]{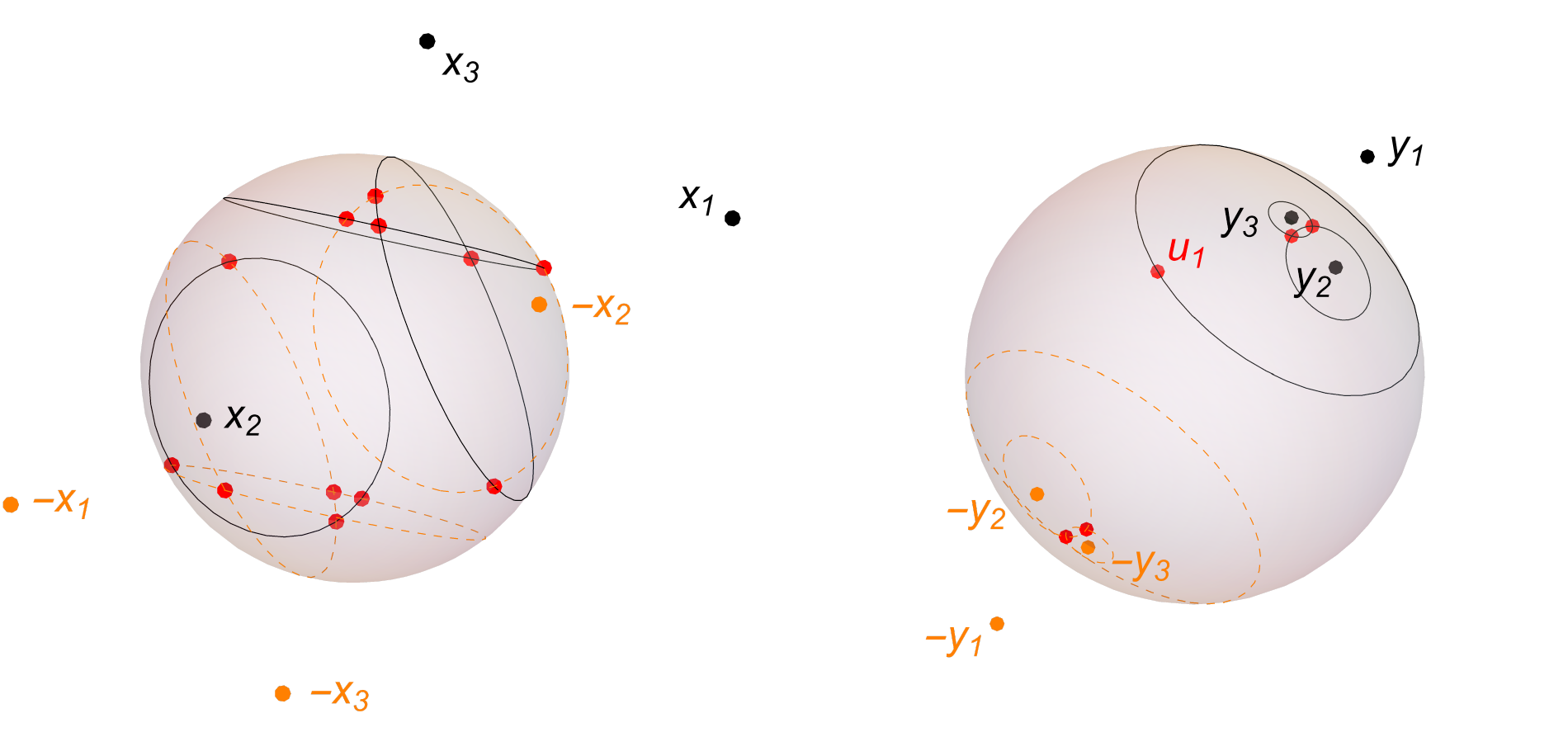}
  \caption{The case $d=3$, with a dataset $\Y$ of three points $\y_1, \y_2, \y_3 \in \R^3$ (black points) and their corresponding circles (black curves) and anti-circles (orange dashed curves) on the unit sphere $\Sph[2]$. Left: Configuration of points where each circle of $\Y$ intersects another (anti-)circle, resulting in two maximal tangent hyperplanes per each intersection of two (anti-)circles. The maximal directions of these (hyper)planes are displayed as the red points on the sphere $\Sph[2]$, and they are exactly the points of $\fdirs$ in Theorem~\ref{theorem:main}. Right: A set of points $\Y$ such that the (anti-)circles of $\y_2$ and $\y_3$ intersect (in the two pairs of red points on $\Sph[2]$, respectively), while none of them intersects the (anti-)circle of $\y_1$ (the largest curves in black and dashed orange, respectively). The set of directions $\fdirs$ has to be chosen as the two points of intersection of circles of $\y_2$ and $\y_3$ (case $k=d-1=2$), and an arbitrary single point $\u_1 \in \Sph[2]$ on the circle of $\y_1$ (red point, case $k=d-2=1$).}
  \label{fig:circles}
\end{figure}

\begin{figure}[htpb]
    \centering
    \includegraphics[width=.9\textwidth]{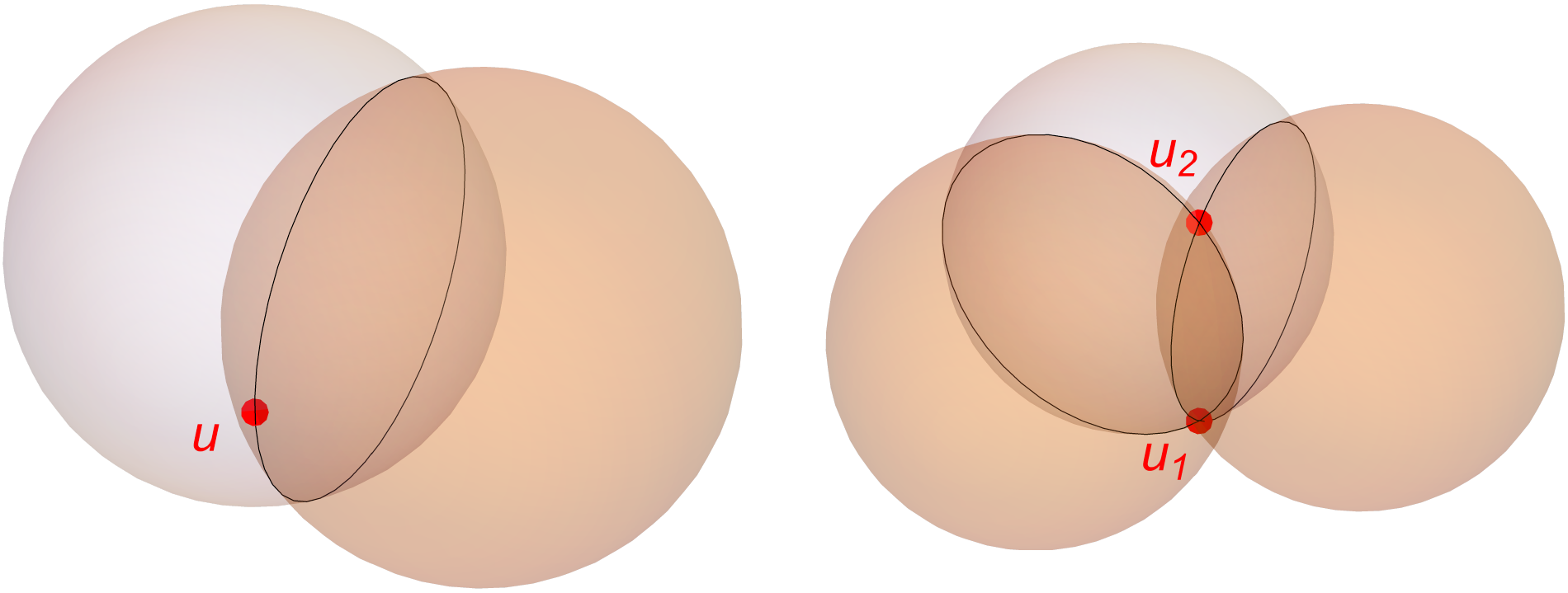}
    \caption{The situation in dimension $d=4$. Left: Two intersecting circles of data points are displayed. Both these circles are $2$-spheres inside the $3$-sphere $\Sph[3]$; here they are displayed canonically embedded into $\R^3$. The intersection of these circles is a $1$-sphere (black curve) that is not intersected by any other (anti-)circle of the dataset. As a maximal direction, an arbitrary direction $\u \in \Sph[3]$ (red point) in the intersection of these two circles is taken in Theorem~\ref{theorem:main} (case $k=d-2=2$). Right: Three circles of points in $\R^4$ that intersect in two maximal directions (a $0$-sphere) displayed as the red points $\u_1, \u_2 \in \Sph[3]$ (case $k=d-1=3$). Again, the setup in $\R^4$ is projected canonically to $\R^3$ in this figure.}
    \label{fig:d4}
\end{figure}

\vskip 0.1 in
\section{Computational algorithms}
\paragraph{}
\vskip 0.1 in \label{Sec:algorithms}

%
%
%
%

Before embarking on the details of the general exact algorithm from Section~\ref{section:idea of algorithm} valid in each dimension $d \geq 2$, we begin by expounding its basic idea in the simpler situation of dimension $d=2$ in Section~\ref{section:bivariate algorithm}. That will help us to fix ideas and develop understanding for the more technical general description in Section~\ref{section:general algorithm}.

\begin{remark}
In all algorithms that follow, $\epsilon$ denotes a small positive constant, that should effectively stand for a ``positive zero'', to account for flopping errors. In our implementations $\epsilon$ is set by default to $10^{-14}$.
\end{remark}

\subsection{Exact algorithm: Bivariate case}  \label{section:bivariate algorithm}

\begin{algo} \label{algorithm:1} \textcolor{\ccol}{\textbf{Inputs:}} $\X = \left\{ \x_1, \dots, \x_n\right\}$, $\bmu \in \R^d$, $\Sigma \in \R^{d \times d}$ positive definite, and $\epsilon > 0$.
  \item[~\stp{1}] Initialize $\textSHD{} = n$ and $\fdirs = \emptyset$.

  \item[~\stp{2}] Transform $\y_i = \Sigma^{-1/2} (\x_i - \bmu)$ for $i = 1, \dots, n$.
  
  \item[~\stp{3}] Find the observations that lie outside $\Sph[1]$ and store them in a list $\Yout = \{\y_{i} \colon \|\y_i\| > 1 \mbox{ and }i \in \left\{ 1, \dots, n \right\}\}$. If $\#\Yout \leq 1$, return $\textSHD{} = 0$ and terminate the algorithm.

  \item[~\stp{4}] For each $\y \in \Yout$, compute its circle composed of the two directions $\u_1, \u_2 \in \Sph[1]$ such that $\u_j\tr(\y-\u_j) = 0$, $j=1,2$. Append the two directions $\u_1$, $\u_2$ of the circle of $\y$, and the two directions of the anti-circle $-\u_1$, $-\u_2$ of $\y$, to the list $\fdirs$.

  \item[~\stp{5}] For each $\vv \in \fdirs$, obtain $p_1 = \#\{i \in \left\{ 1, \dots, n \right\} \colon |\vv\tr \y_i| > 1 + \epsilon\}$ and $p_2 = \#\{i \in \left\{ 1, \dots, n \right\} \colon |\vv\tr \y_i| < 1 - \epsilon\}$. Set $\textSHD{} = \min\{p_1, p_2, \textSHD{}\}$.

  \item[~\textcolor{\ccol}{\textbf{Output:}}] The value of \textSHD{}.
\end{algo}

\begin{remark}
Algorithm~\ref{algorithm:1} is a direct translation of our main procedure from Section~\ref{section:idea of algorithm} in case $d=2$ into pseudocode. In the \proglang{R} package \code{scatterdepth} our code is, however, further optimized for performance and numerical stability. We do the following amendments.
\begin{enumerate}   \label{remark:d2 algorithm}
    \item Once we found in \textbf{Step 3} which observations $\y_i$ are contained inside the unit $1$-sphere, these observations are excluded from the subsequent computation and only the number of them $m \leq n - 1$ is added to $p_2$ in \textbf{Step~5}. This is possible since the observations within $\Sph[1]$ always project inside the slab $\slab{\u}$ for each $\u \in \Sph$.
    \item We use polar coordinates to represent the directions $\fdirs\subset\Sph[1]$. The elements of $\fdirs$ are ordered according to their direction angles $\theta_j \in [-\pi,\pi)$. Instead of the $\epsilon$-perturbation performed in \textbf{Step~5} we take the mid-point $\vv \in \Sph[1]$ on the arc between each two consecutive angles, and use the slab $\slab{\vv}$ directly, see also Figure~\ref{fig:d2}. This way we get $p_1 = \#\{i \in \left\{ 1, \dots, n \right\} \colon |\vv\tr \y_i| > 1 \}$ and $p_2 = \#\{i \in \left\{ 1, \dots, n \right\} \colon |\vv\tr \y_i| < 1 \}$, and we do without the choice of the constant $\epsilon$.
    \item Since each element of $\u \in \fdirs$ is in the list $\fdirs$ together with its opposite direction $-\u$, in the polar domain in $\R^2$ we can restrict to angles $\theta_j \in [0, \pi)$ (Remark~\ref{remark:symmetry}). 
 \end{enumerate}
\end{remark}


%
%

\subsection{Exact algorithm: Multivariate case}   \label{section:general algorithm}

For data of dimension $d > 2$, our algorithm is rather complex. We proceed by presenting its complete pseudocode. A detailed discussion on the individual steps is provided in Comments \textcolor{\ccol}{\textbf{C1}}--\textcolor{\ccol}{\textbf{C8}} after the code. In the code, places where these comments apply are indicated.

\begin{remark}  \label{remark:visited}
In the description of Algorithm~\ref{algorithm:2}, we say that a collection of (anti-)circles corresponding to points from $\Y$ was already \emph{visited} if a maximal direction in their intersection was already found, and added to the list $\fdirs$. According to our discussion in Section~\ref{section:main theorem}, a visited set of \mbox{(anti-)circles} does not correspond to any additional maximal tangent hyperplane that would need to be appended to $\fdirs$, and thus it does not need to be considered. In this sense, once a collection of (anti-)circles has been visited, all its sub-collections will be flagged as visited too. Finally, note that in Algorithm~\ref{algorithm:2} we do not explicitly keep the list of directions $\fdirs$. Instead, the minimum in Step~\ref{S4} of our main algorithm is computed on the fly.
\end{remark}

By $\lin{\u_1, \dots, \u_m} = \left\{ \sum_{i=1}^m \lambda_i \u_i\in \R^d \colon \lambda_i \in \R \mbox{ for all } i \in \left\{ 1, \dots, m \right\} \right\}$
we denote the linear span of the vectors $\u_1, \dots, \u_m \in \R^d$.

\begin{algo} \label{algorithm:2} \textcolor{\ccol}{\textbf{Inputs:}} \label{step0} $\X = \left\{ \x_1, \dots, \x_n\right\}$, $\bmu \in \R^d$, $\Sigma \in \R^{d \times d}$ positive definite, and $\epsilon > 0$.
  \item[~\stp{1}] Initialize $\textSHD{} = n$.  

  \item[~\stp{2}] Transform $\y_i = \Sigma^{-1/2} (\x_i - \bmu)$ for $i = 1, \dots, n$.
  
  \item[~\stp{3}] Compute the list $\Yout = \{\y_{i} \colon \|\y_i\| > 1 \mbox{ and } i \in \left\{ 1, \dots, n \right\} \}$, and denote the number of such data points by $m = \#\Yout$. If $m \leq d -1$, return $\textSHD{} = 0$ and terminate the computation {\upshape{(\textcolor{\ccol}{\textbf{C1}})}}. Without loss of generality, suppose that $\Yout = \left\{ \y_1, \dots, \y_m \right\}$.
  
  \item[~\stpc{4}{Initialization}]  Initialize a series of $d-2$ lists of Boolean values {\upshape{(\textcolor{\ccol}{\textbf{C4}})}}: 
    \begin{itemize}
        \item $\fl{1}$ a vector of length $2\,m$;
        \item $\fl{2}$ a matrix of size $2\,m \times 2\,m$;
        \item $\fl{3}$ a three-dimensional array of size $2\,m \times 2\,m \times 2\,m$;
        \item \dots; 
        \item $\fl{d-2}$ a $(d-2)$-dimensional array of size $2\,m \times 2\,m \times \dots \times 2\,m$.
    \end{itemize}
    All entries of these lists are initialized to \fal. In each dimension of each of these lists, the first $m$ elements correspond to the circles generated by the $m$ points of $\Yout$, and the latter $m$ elements to the anti-circles generated by $\Yout$. The $j$-th list $\fl{j}$ represents indicators of whether $j$-element sets of (anti-)circles were already visited by the algorithm. 
  
    \item[~\stpc{5}{Main loop, case $k = d-1$}] The main loop consists of two nested for-loops. Their setups are described in \textcolor{\ccol}{\textbf{Outer loop}} and~\textcolor{\ccol}{\textbf{Inner loop}}. The batch of work inside the iterated loops is described in~\textcolor{\ccol}{\textbf{Main batch}}.
    
    \item[~\stpf{Outer loop}] For each $(d-1)$-tuple of distinct points $\{\z_1, \dots, \z_{d-1}\} = \left\{ \y_{i_1}, \dots, \y_{i_{d-1}} \right\} \subset \Yout$ with $1 \leq i_1 < \dots < i_{d-1} \leq m$ and $\z_j = \y_{i_j}$, $j=1,\dots,d-1$, run:
  
    \item[~\stpf{Inner loop}] For each vector $\left( s_2, \dots, s_{d-1} \right)\tr$ of $(d-2)$ signs $s_j \in \left\{ -1, +1 \right\}$, $j = 2, \dots, d-1$, replace $\z_j$ by $\z_j^* = s_j \cdot \z_j = s_j \cdot \y_{i_j}$ for $j = 2, \dots, d-1$, while keeping $\z_{1}^* = \z_{1} = \y_{i_1}$ fixed {\upshape{(\textcolor{\ccol}{\textbf{C3}})}}. Denoting by $\Yout^*$ the set of $2\,m$ elements $\y_i$ and $-\y_i$ with $\y_i \in \Yout$ we therefore have $\left\{ \z_1^*, \dots, \z_{d-1}^* \right\} \subset \Yout^*$. Run:
    
    \item[~\stpf{Main batch}] Given the set $\left\{ \z_1^*, \dots, \z_{d-1}^* \right\} \subset \Yout^*$, compute:
            \begin{enumerate}
            \item[(a)] Consider the $(d-2)$-dimensional {\upshape{(\textcolor{\ccol}{\textbf{C5}})}} linear space $\Pi = \lin{\z_2^* - \z_{1}^*, \dots, \z_{d-1}^* - \z_{1}^*}$ parallel to the affine hull of the points $\{ \z_j^* \}_{j=1}^{d-1}$. Obtain the two orthonormal basis vectors $\bm\alpha_1$, $\bm\alpha_2$ of the two-dimensional orthogonal complement $\Pi^\bot$ to $\Pi$.
            \item[(b)] Compute $\w = ((\z_1^*)\tr \bm\alpha_1, (\z_1^*)\tr \bm\alpha_2)\tr \in \R^2$ being the projection of the affine hull of $\{ \z_j^* \}_{j=1}^{d-1}$ into $\Pi^\bot$, and $\w_{i} = (\y_i\tr \bm\alpha_1, \y_i\tr \bm\alpha_2)\tr \in \R^2$, $i = 1, \dots, n$, the projections of the data points into $\Pi^\bot$ {\upshape{(\textcolor{\ccol}{\textbf{C7}})}}. In both cases, the space $\Pi^\bot$ with the basis $\bm\alpha_1$, $\bm\alpha_2$ is canonically identified with the Euclidean space $\R^2$ with the basis $\left(1,0\right)\tr$, $\left(0,1\right)\tr$. 
            \item[(c)] If $\|\w\| < 1$, the collection of points $\left\{ \z_1^*, \dots, \z_{d-1}^* \right\}$ is not a maximal subset of $\Y^*$  {\upshape{(\textcolor{\ccol}{\textbf{C6}})}}. In that case, go to the next iteration of the \textcolor{\ccol}{\textbf{Inner loop}} in \textcolor{\ccol}{\textbf{Step~5}} of the present algorithm. Otherwise, we have $\left\Vert \w \right\Vert > 1$ and we found a new maximal tangent hyperplane.\footnote{Note that $\|\w\| = 1$ cannot happen due to assumption of the general position w.r.t. the Mahalanobis ellipsoid, see Definition~\ref{definition:general position}.} Then run:
                \begin{enumerate}[label=(\roman*)]
                    \item \textcolor{\ccol}{\textbf{Flagging visited (anti-)circles.}} Flag as $\tru$ all elements of the lists $\fl{1}, \dots, \fl{d-2}$ that correspond to the collection of the (anti-)circles associated with $\z_j^*$, $j=1,\dots,d-1$, visited in the current loop of the algorithm (see Remark~\ref{remark:visited}). Precisely, with the convention that $s_1 = +1$, we denote for all $i_j$, $j=1,\dots,d-1$, 
                        \begin{equation}
                            \label{i indices}
                            i_j^* = \begin{cases}
                                    i_j & \mbox{ if }s_j = +1, \\
                                    i_j + m & \mbox{ if }s_j = -1,
                                    \end{cases}
                        \end{equation}  
                    and run the following {\upshape{(\textcolor{\ccol}{\textbf{C4}})}}:
                        \begin{itemize}
                            \item For $d=2$, we have $d-2 = 0$, and no action is necessary.
                            \item For $d=3$, we set $\fl{1}[i_1^*] = \fl{1}[i_2^*] = \tru$ to indicate that the (anti-)circles of $\z_j^* = s_j \cdot \y_{i_j}$, $j=1,2$, were already visited by our algorithm.
                            \item For $d=4$, set $\fl{1}[i_1^*] = \fl{1}[i_2^*] = \fl{1}[i_3^*] = \tru$ as for $d=3$. In addition, for each of the $\binom{3}{2} = 3$ two-point subsets $\left\{ \ell_1^*, \ell_2^* \right\}$ of $\left\{i_1^*, i_2^*, i_3^*\right\}$ fill out $\fl{2}[\ell_1^*, \ell_2^*] = \fl{2}[\ell_2^*, \ell_1^*] = \tru$.
                            \item In general, for $d \geq 2$, for each $k \in \left\{ 1, \dots, d-2\right\}$ and for all $k$-element subsets $\left\{\ell_1^*, \dots,\ell_k^*\right\}$ of $\left\{i_1^*, \dots, i_{d-1}^*\right\}$ set to $\tru$ all the elements of $\fl{k}$ whose indices correspond to a permutation of $\ell_1^*, \dots, \ell_k^*$.
                        \end{itemize}  
                    \item Compute the two $2$-dimensional unit normal vectors $\vv_j$, $j=1,2$ of the tangent line of the $1$-sphere $\Sph[1]$ that passes through $\w$. That is, compute the two directions $\vv_1, \vv_2 \in \Sph[1]$ such that $\vv_j\tr(\w-\vv_j)=0$, $j=1,2$  {\upshape{(\textcolor{\ccol}{\textbf{C7}})}}.
                    \item Compute the quantities $p_{1,j} = \#\{i \in \left\{ 1, \dots, n \right\} \colon |\vv_j\tr \w_{i}| > 1 + \epsilon\}$ and $p_{2,j} = \#\{i \in \left\{ 1, \dots, n \right\} \colon |\vv_j\tr \w_{i}| < 1 - \epsilon\}$ and set $p_j = \min\{ p_{1,j}, p_{2,j}\}$, $j=1,2$.
                    \item Set $\textSHD{} = \min\{p_1, p_2, \textSHD{}\}$.
                \end{enumerate}
        \end{enumerate}
  \item[~\stpc{6}{Main loop, case $k < d-1$}]  For all $k \in \left\{ d-2, \dots, 1 \right\}$ in decreasing order {\upshape{(\textcolor{\ccol}{\textbf{C2}})}} run through the list $\fl{k}$: 
    \begin{enumerate}
        \item[(a)] Consider each entry of $\fl{k}$ that takes value \fal, meaning that the corresponding collection of $k$ (anti-)circles was never visited in \textcolor{\ccol}{\textbf{Step~5}}(c) or in previous iterations of \textcolor{\ccol}{\textbf{Step~6}} of this algorithm (Remark~\ref{remark:visited}). The indices $i_1^*, \dots, i_k^*$ of each such entry correspond to either circles if $i_j^* \leq m$, or to anti-circles if $i_j^* > m$, $j=1,\dots, k$. Define the index $i_j$ and the sign $s_j$ given by
        \[ i_j =   \begin{cases}
                i_j^* & \mbox{ if }i_j^*\leq m, \\
                i_j^*-m & \mbox{ if }i_j^* > m,
                \end{cases}
                \quad
        s_j = \begin{cases}
            +1 & \mbox{ if }i_j^* \leq m, \\
            -1 & \mbox{ if }i_j^* > m.
            \end{cases}
                \]
        The map $i_j^* \mapsto i_j$ is inverse to \eqref{i indices}. Let $\z_{j}^* = s_j \cdot \y_{i_j}$ be the point corresponding to the circle or the anti-circle with index $i_j$. Take the $k$-tuple of points $\left\{\z_1^*, \dots, \z_k^*\right\} \subset \Y^*$.
        \item[(b)] Consider the $(k-1)$-dimensional {\upshape{(\textcolor{\ccol}{\textbf{C5}})}} linear space $\Pi_k = \lin{\z_2^* - \z_{1}^*, \dots, \z_{k}^* - \z_{1}^*}$ parallel to the affine hull of the points $\{ \z_j^* \}_{j=1}^{k}$. Complete $\Pi_k$ by attaching to it the linear span of $d-k-1$ vectors $\vv_1, \dots, \vv_{d-k-1} \in \Sph$ that form (any) orthogonal collection of vectors in the orthogonal complement $\Pi_0^\bot$ of the at most $k$-dimensional linear space $\Pi_0 = \lin{\z_1^*, \dots, \z_k^*}$ {\upshape{(\textcolor{\ccol}{\textbf{C8}})}}. Together, the vectors $\z_2^* - \z_{1}^*, \dots, \z_{k}^* - \z_{1}^*, \vv_1, \dots, \vv_{d-k-1}$ form a $(d-2)$-dimensional linear space 
            \[  \Pi = \lin{\z_2^* - \z_{1}^*, \dots, \z_{k}^* - \z_{1}^*,  \vv_1, \dots, \vv_{d-k-1}}. \]
        Obtain the two orthonormal basis vectors $\bm\alpha_1$, $\bm\alpha_2 \in \Sph$ of the two-dimensional orthogonal complement $\Pi^\bot$ to $\Pi$  {\upshape{(\textcolor{\ccol}{\textbf{C7}})}}. 
        \item[(c)] Compute $\w = ((\z_1^*)\tr \bm\alpha_1, (\z_1^*)\tr \bm\alpha_2)\tr \in \R^2$ being the projection of the affine hull of $\{ \z_j^* \}_{j=1}^{k}$ into $\Pi^\bot$, and $\w_{i} = (\y_i\tr \bm\alpha_1, \y_i\tr \bm\alpha_2)\tr \in \R^2$, $i = 1, \dots, n$, the projections of the data points into $\Pi^\bot$ {\upshape{(\textcolor{\ccol}{\textbf{C7}})}}. In both cases, the space $\Pi^\bot$ with the basis $\bm\alpha_1$, $\bm\alpha_2$ is canonically identified with the Euclidean space $\R^2$ with the basis $\left(1,0\right)\tr$, $\left(0,1\right)\tr$.
        \item[(d)] If $\|\w\| < 1$, the collection of points $\left\{ \z_1^*, \dots, \z_{k}^* \right\}$ is not a maximal subset of $\Y^*$  {\upshape{(\textcolor{\ccol}{\textbf{C6}})}}. In that case, go to the next iteration of \textcolor{\ccol}{\textbf{Step~6}} of the present algorithm. Otherwise, we have $\left\Vert \w \right\Vert \geq 1$ and we found a new maximal tangent hyperplane. Then run:
            \begin{enumerate}[label=(\roman*)]
                \item Flag as $\tru$ all subsets of the current collection of circles and anti-circles visited in the present iteration of \textcolor{\ccol}{\textbf{Step~6}} in the same way as in \textcolor{\ccol}{\textbf{Step~5}}(c)(i).
                \item Compute a $2$-dimensional unit normal vector $\vv\in\Sph[1]$ of (any) tangent line of the $1$-sphere $\Sph[1]$ that passes through $\w$, i.e. find any $\vv \in \Sph[1]$ such that $\vv\tr(\w-\vv)=0$.
                \item Compute $p_{1} = \#\{i \in \left\{ 1, \dots, n \right\} \colon |\vv\tr \w_{i}| > 1 + \epsilon\}$ and $p_{2} = \#\{i \in \left\{ 1, \dots, n \right\} \colon |\vv\tr \w_{i}| < 1 - \epsilon\}$ and set $p = \min\{ p_{1}, p_{2}\}$.
                \item Set $\textSHD{} = \min\{p, \textSHD{}\}$.
            \end{enumerate}        
    \end{enumerate}
  \item[~\textcolor{\ccol}{\textbf{Output:}}] The value of \textSHD{}.
\end{algo}


We see that the complete description of the general Algorithm~\ref{algorithm:2} is rather contrived. Several explanatory comments are in order.

\subsubsection*{\textcolor{\ccol}{\ding{228} C1:} Step~3 and early termination}

In \textbf{Step~3} we halt the computation immediately if the number of observations outside the unit sphere does not exceed $d-1$. Indeed, if there are at most $d-1$ of such data points in $\Y$, together with the origin they lie in a hyperplane in $\R^d$ with unit normal $\u \in \Sph$. The slab $\slab{\u}$ then contains all the data points from $\Y$ in its interior, and $\cslab{\u}$ does not contain any observation, giving $\textSHD{} = 0$.

Further, if at any time during the run the current value of \textSHD{} reaches $0$, the algorithm is terminated and $\textSHD{} = 0$ is returned. For simplicity, this is not indicated in Algorithm~\ref{algorithm:2}.

\subsubsection*{\textcolor{\ccol}{\ding{228} C2:} The order of loops in Steps~5 and~6} 

Recall that we explore maximal tangent hyperplanes that contain $k\in\{1,\dots,d-1\}$ data points. If a set $A\subset\Y^*$ of $k$ points is a maximal subset of $\Y^*$, then no strict subset of $A$ can be maximal. For that reason, we pass through $k$-point subsets with $k$ decreasing from $d-1$ (\textbf{Step~5}) and $d-2$ (\textbf{Step~6}) to $1$ (\textbf{Step~6}), and we only take those combinations of $k$ points that were not previously visited (as part of a bigger maximal subset of $\Y^*$). Note that formally, \textbf{Steps~5} and~\textbf{6} could be merged in the description of Algorithm~\ref{algorithm:2} and explained together. \textbf{Step~5} is only a particular case of \textbf{Step~6} for $k = d-1$. We chose to explain \textbf{Step~5} separately for sake of clarity, as the geometric ideas are easier to understand for $k=d-1$.

\subsubsection*{\textcolor{\ccol}{\ding{228} C3:} Steps~5 and~6: Signs attached to data points}

By keeping the first element $\z_1^* = \z_1$ fixed, in the inner loop of \textbf{Step~5} we guarantee that we do not explore redundant combinations of points $\z_1^*, \dots, \z_{d-1}^*$. This is true because each combination of signs $\left(s_1, \dots, s_{d-1} \right)\tr \in \left\{ -1 , +1 \right\}^{d-1}$ results in the same values $p_1$, $p_2$ in \textbf{Step~5}(c) as the combination of the opposite signs $\left(-s_1, \dots, -s_{d-1} \right)\tr$, thanks to Remark~\ref{remark:symmetry}. Keeping $s_1 = +1$ in all cases, we reduce the complexity of the loop in \textbf{Step~5} of the algorithm to a half. The same restriction is applied in \textbf{Step~6}, although for brevity this is not indicated in the formal description of Algorithm~\ref{algorithm:2}.

\subsubsection*{\textcolor{\ccol}{\ding{228} C4:} Step~4 and economical representation of Boolean lists}

Our representation in the series of $d-2$ lists $\fl{1}, \dots, \fl{d-2}$ is redundant, and was introduced as such only for ease of exposition. 
The implementation of the lists $\fl{1}, \dots, \fl{d-2}$ used in our $\proglang{C++}$ function is much more economical. All the lists are stored inside a single $(d-2)$-dimensional array $\flc$ of dimensions $2\,m \times 2\,m \times \dots \times 2\,m$. Even in this structure, only some elements are effectively used:
    \begin{itemize}
        \item The first $m$ elements of $\fl{1}$, corresponding to the circles of $\y_i$ are the entries $\flc[i,i,\dots,i]$, $i=1,\dots,m$. Due to our discussion in Comment~\textcolor{\ccol}{\textbf{C3}} we do not need to consider the anti-circles of individual points, i.e. indices $i \leq m$ are sufficient for $\fl{1}$.
        \item For the list $\fl{2}$, by~\textcolor{\ccol}{\textbf{C3}} again, we need to store only the information whether the collection of points $\y_{i_1}$ and $\y_{i_2}$ with $1 \leq i_1 < i_2 \leq m$ was visited, with $\y_{i_1}$ corresponding to a circle. The circle of $\y_{i_1}$ and the circle of $\y_{i_2}$ correspond to $\flc[i_1,i_1,\dots,i_1,i_2]$; the circle of $\y_{i_1}$ and the anti-circle of $\y_{i_2}$ are $\flc[i_1,i_1,\dots,i_1,i_2+m]$. This way, only entries $\flc[i_1, \dots, i_1, i_2]$ are used with $i_1 \in \left\{ 1, \dots, m \right\}$ and $i_2 \in \left\{ i_1+1, \dots, m \right\} \cup \left\{i_1 + 1 + m, \dots, 2\,m \right\}$ for encoding $\fl{2}$.
        \item Analogously, for triples and $\fl{3}$, the information about the collection of the circle of $\y_{i_1}$ and the (anti-)circles of $\y_{i_2}$ and $\y_{i_3}$ with $1 \leq i_1 < i_2 < i_3 \leq m$ is stored in the element $\flc[i_1, i_1, \dots, i_1, i_2 + s_2\,m, i_3 + s_3\, m]$, with $s_j = 0$ if the circle of $\y_{i_j}$ is considered and $s_j = 1$ if the anti-circle of $\y_{i_j}$ is taken, $j=2,3$.
        \item $\dots$
        \item Finally, a set of $d-2$ distinct indices $1 \leq i_1 < i_2 < \dots < i_{d-2} \leq m$ is represented as $\flc[i_1, i_2 + s_2\,m, \dots, i_{d-2} + s_{d-2}\, m]$ with $s_j \in \{0,1\}$ for all $j \in \left\{ 2, \dots, d-2 \right\}$ as in the previous cases.
    \end{itemize}    
This allows us to store all the necessary information about the collections of (anti-)circles in a single data structure $\flc$.

\subsubsection*{\textcolor{\ccol}{\ding{228} C5:} Steps~5 and 6: Dimensionality of affine spaces}

In \textbf{Steps~5}(a) and~\textbf{6}(b) we claimed that the dimension of the space $\Pi$ is $d-2$, and the dimension of $\Pi_k$ is $k-1$, respectively. Indeed, our assumption of data points $\X$ being in general position from Definition~\ref{definition:general position} guarantees that any affine hull $A$ of $k$ points from $\Y^*$ that does not intersect the unit sphere $\Sph$ must be of dimension $k-1$. For if it was not, it would be possible to find a hyperplane $H$ tangent to $\Sph$ that contains $A$, which means that inside $H$ there is a set of $k \leq d-1$ points of $\Y^*$ that are not in general position.

\subsubsection*{\textcolor{\ccol}{\ding{228} C6:} Steps~5(c) and 6(d): Intersecting (anti-)circles and affine hulls}

We need to be able to find intersections of (anti-)circles analytically. To compute those points efficiently, we state an equivalence of \begin{enumerate*}[label=(\roman*)] \item the fact that (anti-)circles have non-empty intersection, with \item a certain property of the affine hull of the corresponding sample points.\end{enumerate*} To simplify our argument, consider only circles; the case of anti-circles is dealt with completely analogously. The proof of the following lemma can be found in the Appendix, Section~\ref{section:proof intersection1}. 

\begin{lemma}   \label{lemma:intersection1}
A set of $k \in \left\{ 1, \dots, d-1 \right\}$ circles has a non-empty intersection in $\Sph$ if and only if the affine hull of the corresponding sample points from $\Y$ does not intersect the interior of the unit ball $B = \left\{ \x \in \R^d \colon \left\Vert \x \right\Vert < 1 \right\}$.
\end{lemma}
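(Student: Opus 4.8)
\textbf{Proof plan for Lemma~\ref{lemma:intersection1}.}

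The plan is to reduce the statement to a single chain of equivalences that links a point lying on all $k$ circles to a geometric condition on the affine hull of the corresponding data points. Recall that the circle of $\y_i$ (with $\left\Vert \y_i \right\Vert > 1$) is $\Sph \cap \{\x \colon \x\tr\y_i = 1\}$. Hence a direction $\u \in \Sph$ lies on all $k$ circles simultaneously if and only if $\u\tr \y_{i_j} = 1$ for $j = 1, \dots, k$, i.e. if and only if $\u$ lies in the affine subspace $F = \{\x \in \R^d \colon \x\tr\y_{i_j} = 1 \text{ for all } j\}$ \emph{and} on the unit sphere. So the intersection of the $k$ circles is non-empty if and only if $F \cap \Sph \ne \emptyset$. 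The affine subspace $F$ is the annihilator-type flat of the data points; by our general-position assumption (Comment~\textbf{C5}) the $\y_{i_j}$ are such that $F$ has codimension exactly $k$, so $\dim F = d-k$.

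The second step is to recognise $F$ as the polar dual of the affine hull $A = \mathrm{aff}\{\y_{i_1}, \dots, \y_{i_k}\}$, and then to invoke the elementary polarity fact that an affine flat $A$ not passing through the origin has $F \cap \Sph \ne \emptyset$ precisely when $A$ does not meet the open unit ball $\intr{B}$. Concretely I would argue both directions explicitly. For the ``only if'' direction: if $\u \in F \cap \Sph$, then for any convex combination $\x = \sum_j \lambda_j \y_{i_j}$ (with $\sum_j \lambda_j = 1$) we get $\u\tr\x = \sum_j \lambda_j (\u\tr \y_{i_j}) = \sum_j \lambda_j = 1$, whence $\left\Vert \x \right\Vert \ge |\u\tr\x| = 1$ by Cauchy--Schwarz; since $\left\Vert\u\right\Vert = 1$, this gives $\left\Vert\x\right\Vert \ge 1$, so $A$ (in fact the whole flat $\mathrm{aff}$, not just the convex hull — one checks the same computation holds for arbitrary affine combinations) avoids $\intr B$. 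For the ``if'' direction: suppose $A \cap \intr B = \emptyset$. Let $\x_0$ be the point of $A$ nearest the origin; then $\left\Vert \x_0 \right\Vert \ge 1$, and $\x_0$ is orthogonal to the direction space of $A$, i.e. $\x_0\tr(\y_{i_j} - \y_{i_1}) = 0$ for all $j$. Set $\u = \x_0 / \left\Vert \x_0 \right\Vert \in \Sph$; then $\u\tr\y_{i_j}$ is the same value $c := \u\tr\x_0 = \left\Vert\x_0\right\Vert \ge 1$ for every $j$. If $c = 1$ we are done; if $c > 1$, I would rescale by moving within the flat $F' = \{\x \colon \x\tr\y_{i_j} = c\ \forall j\}$ — more cleanly, replace the nearest-point argument by: the function $\u \mapsto \min_j \u\tr\y_{i_j}$ restricted to the relevant great-subsphere attains a value that, combined with $A \cap \intr B = \emptyset$ being equivalent to $\mathrm{dist}(0,A)\ge 1$, forces the common value to be exactly achievable at $1$ by a continuity/scaling argument. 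Care is needed here because $F$ is a translate of $A^\perp$-type space, not $A^\perp$ itself.

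A cleaner route for the ``if'' direction, which I would actually adopt, is the following: the support function computation shows $\mathrm{dist}(0, A) = 1/\left(\text{something}\right)$ is not quite right because $A$ is a flat, not a halfspace; instead use that $A \cap \intr B = \emptyset$ iff there is a hyperplane $H$ through $A$ with $\mathrm{dist}(0,H) \ge 1$ (since $A$ itself, being an affine flat disjoint from the open ball, can be extended to such a hyperplane — the nearest point $\x_0$ of $A$ to $0$ has norm $\ge 1$ and $H := \{\x \colon \x\tr\x_0 = \left\Vert\x_0\right\Vert^2\}$ contains $A$). Writing $H = \{\x \colon \x\tr\n = 1\}$ with $\left\Vert\n\right\Vert = 1/\mathrm{dist}(0,H) \le 1$, we have $\n\tr\y_{i_j} = 1$ for all $j$; but we need a \emph{unit} vector. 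Since $\left\Vert\n\right\Vert \le 1$, the vector $\n$ lies in the closed unit ball, and one can find $\u \in \Sph$ with $\u\tr\y_{i_j} = 1$ for all $j$ as long as the flat $F = \{\x \colon \x\tr\y_{i_j} = 1\}$ meets $\Sph$ — and $F$ contains $\n$ with $\left\Vert\n\right\Vert \le 1$ and extends to infinity in the direction space $A^\perp$-complement, so by continuity of the norm along $F$ (which is unbounded, hence takes all values $\ge \mathrm{dist}(0,F)$, and $\mathrm{dist}(0,F) = \left\Vert\n\right\Vert \le 1$) it must hit the value $1$. General position rules out the degenerate equality $\left\Vert\n\right\Vert = 1$ only on a boundary case, which is fine. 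This completes the equivalence.

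The main obstacle I anticipate is precisely the bookkeeping in the last paragraph: getting from ``there is a hyperplane through $A$ at distance $\ge 1$'' to ``there is a \emph{unit} normal $\u$ with $\u\tr\y_{i_j} = 1$'' requires noting that the flat $F$ is unbounded (its direction space is the orthogonal complement of $\lin{\y_{i_1}, \dots, \y_{i_k}}$, which is non-trivial since $k \le d-1$) and that the distance from the origin to $F$ equals $\left\Vert\n\right\Vert$ where $\n$ is the minimum-norm solution of the linear system $\x\tr\y_{i_j} = 1$; once $\mathrm{dist}(0, F) \le 1$, continuity and unboundedness of $F$ give a point of norm exactly $1$. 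The general-position hypothesis is used twice: to guarantee $F$ has the expected dimension $d-k \ge 1$ (so $F$ is genuinely unbounded and the argument goes through), and to exclude the tangential degeneracy $\left\Vert\n\right\Vert = 1$ with $A$ touching $\bd B$, which corresponds exactly to a line through two data points touching the sphere — forbidden by Definition~\ref{definition:general position}.
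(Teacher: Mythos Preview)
Your argument is correct, and in fact more complete than the paper's own proof. The forward direction via $\u\tr\x=\sum_j\lambda_j(\u\tr\y_{i_j})=1$ and Cauchy--Schwarz is exactly the paper's reasoning, phrased there as ``$H_{\u}$ contains $A$ and does not meet $B$, hence $A$ does not meet $B$''. For the converse the paper writes only: ``should the affine hull $A$ intersect the open ball $B$, then necessarily any hyperplane that contains $A$ intersects $B$, and thus it cannot be tangent to $\Sph$'' --- which, read literally, is the contrapositive of the forward implication again and leaves the genuine ``if'' direction to the reader. Your construction (nearest point $\x_0\in A$, the vector $\n=\x_0/\left\Vert\x_0\right\Vert^2\in F$ with $\left\Vert\n\right\Vert=1/\left\Vert\x_0\right\Vert\le 1$, then continuity of the norm on the flat $F$ of dimension $d-k\ge 1$ to hit the value $1$) is precisely what is needed to fill that gap, and it does so cleanly.

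Two small remarks. First, your identification $\mathrm{dist}(0,F)=\left\Vert\n\right\Vert$ is correct because $\n$ is the minimum-norm solution of $\y_{i_j}\tr\x=1$; you might state this in one line rather than discover it mid-paragraph. Second, the general-position hypothesis is actually not needed for the lemma as stated: $\dim F\ge d-k\ge 1$ follows from $k\le d-1$ alone, and the open-ball formulation already absorbs the boundary case $\left\Vert\x_0\right\Vert=1$. You can safely drop the appeal to Definition~\ref{definition:general position} here and shorten the write-up to your final ``cleaner route'' paragraph.
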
 

This result is not applied directly in $\R^d$, but rather in combination with a projection into $\R^2$ as discussed in Comment~\textcolor{\ccol}{\textbf{C7}}.

\subsubsection*{\textcolor{\ccol}{\ding{228} C7:} Steps~5 and~6: Projections into two-dimensional spaces}

An important nuance of Algorithm~\ref{algorithm:2} concerns the way the maximal directions $\u_j \in \fdirs$ are found in \textbf{Steps~5} and~\textbf{6}. Suppose that $\u_j$ lies in the intersection of $k$ circles ($k=d-1$ for \textbf{Step~5}, or $k<d-1$ for \textbf{Step~6}) corresponding to points $\y_1, \dots, \y_k$ (without loss of generality). We use a projection approach to determine $\u_j \in \Sph$. We first project our data into a two-dimensional subspace $\Pi^\bot$ orthogonal to the affine hull of $\y_1, \dots, \y_k$. There are two situations to be distinguished:
    \begin{enumerate}
        \item In \textbf{Step~5} we have $k = d-1$. The affine hull of $\y_1, \dots, \y_k$ is, according to Comment~\textcolor{\ccol}{\textbf{C5}}, an affine space of dimension $d-2$, and $\Pi^\bot$ is simply its orthogonal complement.
        \item If \textbf{Step~6} we have $k<d-1$. Then, the points $\y_1, \dots, \y_k$ are complemented by a set $V$ of (arbitrarily chosen) vectors such that the complete set $V \cup \left\{ \y_1, \dots, \y_k \right\}$ generates an affine space of dimension $d-2$. The way the complementary set $V$ is found is explained in detail in Comment~\textcolor{\ccol}{\textbf{C8}}. The space $\Pi^\bot$ is again taken as the orthogonal complement to this affine space. 
    \end{enumerate}  
    
Both orthogonal complements in \textbf{Steps~5} and~\textbf{6}, and the completion of a subspace by orthogonal vectors in \textbf{Step~6}, are obtained using the QR-decomposition of the matrix whose $k-1$ columns are the vectors $\y_1 - \y_k, \dots, \y_{k-1} - \y_k$ \citep[Section~2.1]{Horn_Johnson2013}.

After projecting our setup into a two-dimensional space $\Pi^\bot$ (identified with $\R^2$), the whole affine hull of $\y_1, \dots, \y_k$ projects to a single point $\w$ in $\Pi^\bot$, which lies outside the open unit ball in $\R^2$ if and only if the intersection of the circles of $\y_1, \dots, \y_k$ is non-empty, see Comment~\textcolor{\ccol}{\textbf{C6}}. For this point $\w$, the two directions $\vv_1, \vv_2 \in \Sph[1]$ inside the two-dimensional space $\Pi^\bot$ such that the tangent lines of $\Sph[1]$ at $\vv_1, \vv_2$ pass through $\w$ are found by direct calculation as vectors $\vv_1, \vv_2 \in \Sph[1]$ such that $\vv_j\tr (\w - \vv_j) = 0$. The final vector $\u_j \in \Sph$ is the unique direction in $\R^d$ that is projected onto $\vv_j \in \Pi^\bot$. In our implementation, however, the final vectors $\u_j$, $j=1,2$, are not needed to be found, since we only need the number of data points $\y_i$ strictly inside (or outside) the slab $\slab{\u_j}$ in~\eqref{equation:main theorem}. This can be inferred directly inside the two-dimensional space $\Pi^\bot$; it is the number of the projected points $\w_i$ strictly inside (or outside, respectively) the two-dimensional slab given by $\vv_j$. For an illustration in dimension $d=3$ see Figure~\ref{fig:C7}.

Our projection technique into a two-dimensional space $\Pi^\bot$ always results in exactly two maximal directions $\u_1, \u_2 \in \Sph$ (or, more precisely, their equivalent representatives $\vv_1, \vv_2 \in \Sph[1]$). They are both needed if $k=d-1$, because in that case we want to find both maximal directions of the $0$-sphere of intersection of the (anti-)circles, see Theorem~\ref{theorem:main} and Lemma~\ref{lemma:intersection} in the Appendix. This is done in \textbf{Step~5}(c). If $k<d-1$, it is enough to take any single such maximal direction as we do in \textbf{Step~6}(d).

\begin{figure}[htpb]
    \centering
    \includegraphics[width=.9\textwidth]{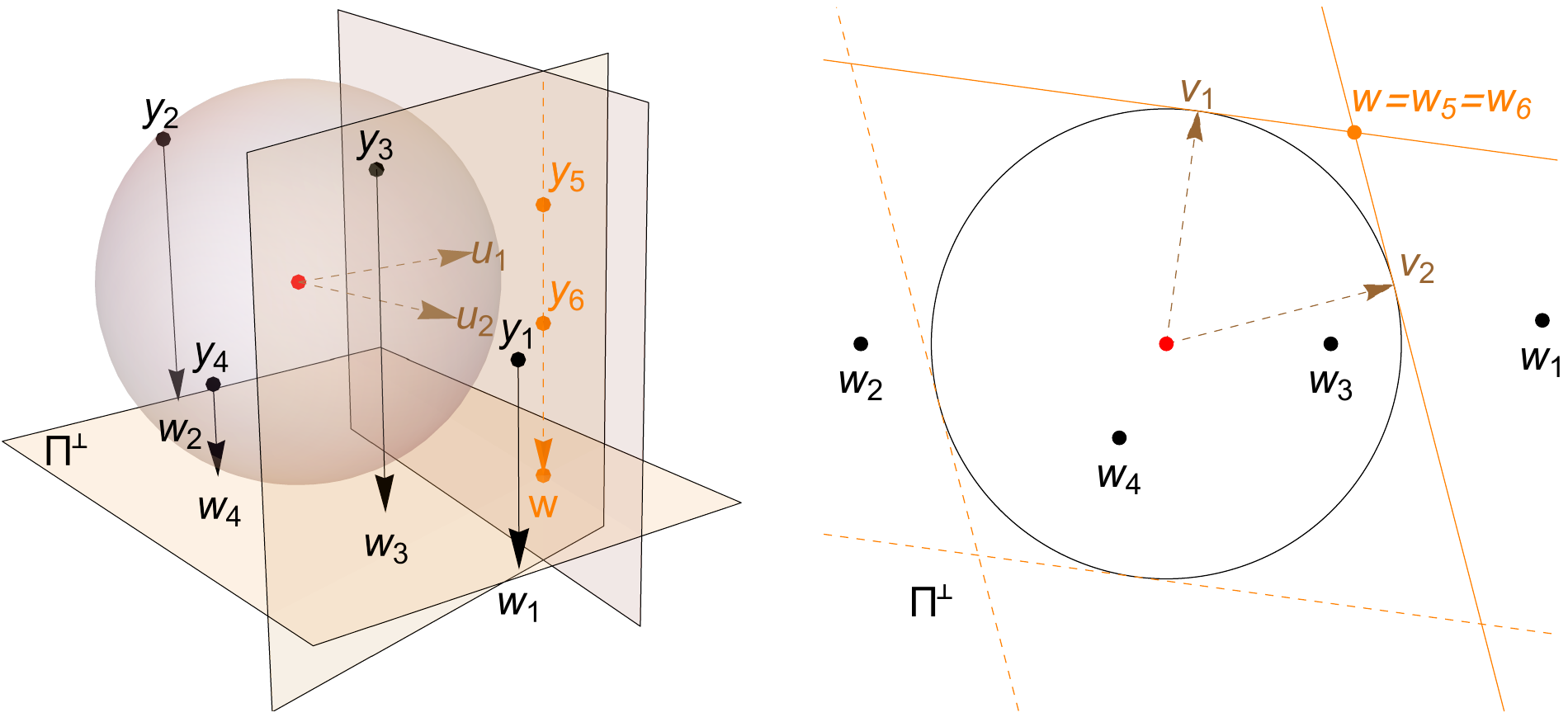}
    \caption{The principle of projection from Comment~\textcolor{\ccol}{\textbf{C7}} in dimension $d=3$. Left: A dataset $\Y$ of six observations $\y_1, \dots, \y_6$. In \textbf{Step~5} we now consider $i_1 = 5$, $i_2 = 6$, and take $\z_1^* = \y_5$, $\z_2^* = \y_6$. The linear space $\Pi$ is parallel with the line spanned by $\y_5$ and $\y_6$ (dashed orange line). The orthogonal space $\Pi^\bot$ is the horizontal plane, where all data points project (solid arrows); the situation inside the plane $\Pi^\bot$ is visualized on the right hand panel of the figure. The projection $\w$ of both $\z_1^*$ and $\z_2^*$ into $\Pi^\bot$ lies outside the unit sphere $\Sph[1]$, meaning that by Lemma~\ref{lemma:intersection1}, two maximal directions $\u_1, \u_2 \in \Sph$ are found (brown arrows) in the left hand figure. According to our discussion in Comment~\textcolor{\ccol}{\textbf{C7}}, these correspond to the only directions that project into $\vv_1, \vv_2 \in \Sph[1]$ in $\Pi^\bot$. In Algorithm~\ref{algorithm:2}, directions $\u_1, \u_2 \in \Sph$ are not computed, as $\y_i \in \slab{\u_j}$ if and only if $\w_i \in \slab{\vv_j}$ for each $i = 1,\dots,n$ and $j = 1,2$. Thus, once projected into $\Pi^\bot$, only $\vv_1$ and $\vv_2$ need to be found.}
    \label{fig:C7}
\end{figure}

\subsubsection*{\textcolor{\ccol}{\ding{228} C8:} Step~6: Expanding the affine hull} 


We are given a $(k-1)$-dimensional\footnote{This follows from Definition~\ref{definition:general position}, thanks to the derivation in Comment~\textcolor{\ccol}{\textbf{C5}}.} affine hull $A$ of the points $\{ \z_j^* \}_{j=1}^k$. We know that $A$ intersects the set $B = \left\{ \x \in \R^d \colon \left\Vert \x \right\Vert < 1 \right\}$ if and only if the (anti-)circles corresponding to $\z_1^*, \dots, \z_k^*$ intersect, due to Comment~\textcolor{\ccol}{\textbf{C6}}. To be able to project into a two-dimensional orthogonal complement in \textbf{Step~6}(c) of the algorithm, we first add to the affine space $A$ the linear space $C$ given as the span of $(d-k-1)$ orthonormal vectors $\vv_1, \dots, \vv_{d-k-1}$ from the orthogonal complement $\Pi_0^\bot$ to the linear space $\Pi_0 = \lin{\z_1^*, \dots, \z_k^*}$. This forms a $(d-2)$-dimensional affine space 
    \begin{equation}\label{equation:A'}
    A' = A + C = \left\{ \a + \c \colon \a \in A \mbox{ and }\c \in C \right\}.
    \end{equation}
For the proof of the next lemma see the Appendix, Section~\ref{section:proof lemma2}.
    
\begin{lemma}   \label{lemma:2}
For the affine space $A'$ from~\eqref{equation:A'} we have that $A \subset A'$. Further, $A'$ intersects $B = \left\{ \x \in \R^d \colon \left\Vert \x \right\Vert < 1 \right\}$ if and only if $A$ intersects $B$. In particular, by Comment~\textcolor{\ccol}{\textbf{C6}}, the circles of the points $\{ \z_j^* \}_{j=1}^k$ intersect in $\Sph$ if and only if $A'$ intersects $B$.
\end{lemma}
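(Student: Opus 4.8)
The plan is to reduce both assertions of the lemma to a comparison of the Euclidean distances from the origin to the affine spaces $A$ and $A'$. Since $A$ and $A'$ are closed affine subspaces and $B=\{\x\in\R^d\colon\|\x\|<1\}$ is the \emph{open} unit ball centered at $\0$, we have $A\cap B\neq\emptyset$ if and only if $\mathrm{dist}(\0,A)<1$, and likewise for $A'$. The inclusion $A\subset A'$ is immediate: $C$ is a linear subspace, hence $\0\in C$, so $A=A+\{\0\}\subseteq A+C=A'$; this inclusion also gives $\mathrm{dist}(\0,A')\le\mathrm{dist}(\0,A)$ for free. Everything therefore comes down to the reverse inequality $\mathrm{dist}(\0,A')\ge\mathrm{dist}(\0,A)$, after which the stated equivalences (including the final sentence via Comment~\textbf{C6}, i.e. Lemma~\ref{lemma:intersection1}) follow at once.

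For the reverse inequality I would exploit that the appended linear space $C$ is orthogonal to the \emph{entire} linear span $\Pi_0=\lin{\z_1^*,\dots,\z_k^*}$, not merely to the direction space of $A$, and that $\Pi_0$ already contains everything relevant to $A$. Write $\vec A=\lin{\z_2^*-\z_1^*,\dots,\z_k^*-\z_1^*}$ for the direction space of $A$, and let $\a^\star$ be the orthogonal projection of $\0$ onto $A$, so that $\|\a^\star\|=\mathrm{dist}(\0,A)$ and $\a^\star\perp\vec A$ (the standard foot-of-perpendicular property). Every point of $A$ is an affine, hence linear, combination of the $\z_j^*$, so $A\subseteq\Pi_0$; consequently $\a^\star\in\Pi_0$ and $\vec A\subseteq\Pi_0$. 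Since $C\subseteq\Pi_0^\bot$ by construction, this yields $C\perp\a^\star$ and $C\perp\vec A$.

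Now an arbitrary point of $A'=A+C$ can be written as $\a^\star+\vv+\c$ with $\vv\in\vec A$ and $\c\in C$, and the three vectors $\a^\star,\vv,\c$ are pairwise orthogonal (from $\a^\star\perp\vec A$ together with the two relations just established). The Pythagorean identity then gives $\|\a^\star+\vv+\c\|^2=\|\a^\star\|^2+\|\vv\|^2+\|\c\|^2\ge\|\a^\star\|^2$, so $\mathrm{dist}(\0,A')\ge\|\a^\star\|=\mathrm{dist}(\0,A)$. Combining with the trivial inequality gives $\mathrm{dist}(\0,A')=\mathrm{dist}(\0,A)$, which finishes the proof.

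I do not anticipate a genuine obstacle here; the statement is a clean orthogonal-decomposition argument. The one point requiring care is verifying that $C$ is orthogonal not only to the direction space $\vec A$ but also to the closest point $\a^\star$ --- this is precisely why $C$ is taken inside $\Pi_0^\bot$ rather than inside $(\vec A)^\bot$, and the argument above makes that role explicit. As a byproduct one also reads off $\dim A'=\dim\vec A+\dim C=(k-1)+(d-k-1)=d-2$ (using $\vec A\cap C=\{\0\}$), matching the dimension claim made just before the lemma.
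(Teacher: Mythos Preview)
Your proof is correct and follows essentially the same route as the paper: both hinge on the observation that every point of $A$ is a linear combination of $\z_1^*,\dots,\z_k^*$, hence lies in $\Pi_0$, so $A\perp C$ and the Pythagorean theorem gives $\|\a+\c\|^2=\|\a\|^2+\|\c\|^2$ for $\a\in A$, $\c\in C$. The paper applies this directly to an arbitrary $\a\in A$ without introducing the foot of perpendicular $\a^\star$; your extra decomposition $\a=\a^\star+\vv$ is unnecessary but harmless, and yields the slightly stronger conclusion $\mathrm{dist}(\bm 0,A')=\mathrm{dist}(\bm 0,A)$ rather than just the required equivalence.
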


Finally, by considering an arbitrary set of vectors $\vv_1, \dots, \vv_{d-k-1}$ in \textbf{Steps~6}(c) and (d) we find an arbitrary point (see Theorem~\ref{theorem:main}) in the intersection of the associated (anti-)circles, if such a point exists. It exists if and only if in \textbf{Step~6}(d) we have $\left\Vert \w \right\Vert < 1$ by Comment~\textcolor{\ccol}{\textbf{C6}}.

%
%


%
%
%

\subsection{Approximate algorithms} \label{section:approximate algorithms}

We will see in Section~\ref{Sec:Illustrations} that especially in higher dimensions $d > 4$ and sample sizes starting at hundreds of points, the exact computation of \textSHD{} depth becomes slow. For that reason, we propose two approximate computational algorithms: 
    \begin{itemize}
    \item An algorithm based on randomly sampled directions $\u_j \in \Sph$, $j = 1, \dots, N$, distributed uniformly in the unit sphere $\Sph$. The approximation of $\SHD(\Sigma, \bmu, \X)$ is
        \[ \min_{j=1,\dots,N} \min\left\{ \# \left\{ i \colon |\u_j\tr \Sigma^{-1/2}(\x_i - \bmu)| \le 1 \right\}, \# \left\{ i \colon |\u_j\tr \Sigma^{-1/2}(\x_i - \bmu)| \ge 1 \right\} \right\}. \]
    In this formula we replace the whole sphere $\Sph$ in~\eqref{equation:scatter depth} by the finite set of randomly chosen points $\left\{ \u_1, \dots, \u_N \right\}\subset \Sph$ and use affine invariance of \textSHD{} as discussed in Section~\ref{section:affine invariance}.
    \item A random extension of Algorithm~\ref{algorithm:2}, where instead of running through all $\binom{n}{d-1}$ tuples of sample points in the outer loop of \textbf{Step~5}, only a random subset of $N$ of the $(d-1)$-tuples of distinct points is selected in \textbf{Step~5}. The search for maximal directions in less than $d-1$ (anti-)circles in \textbf{Step~6} is not executed in this approximation.
    \end{itemize}
In the \proglang{R} package \pkg{scatterdepth} the first approximation method is called \code{rdirections} (standing for \emph{random directions}) and the second \code{rpoints} (standing for \emph{random points}). The method \code{rdirections} can be seen as an extension of the idea of approximate computation applicable to any depth satisfying the projection property \cite[Section~6]{Dyckerhoff2004}. Especially in higher dimensions, the method \code{rdirections} is not expected to perform well; that was observed already for the location halfspace depth in \cite{Nagy_etal2020} and \cite{Dyckerhoff_etal2021}. Both approximation methods are compared in a numerical study in Section~\ref{Sec:Illustrations}.

\subsection{Implementation}

We have implemented both exact Algorithm~\ref{algorithm:1} for $d = 2$ and Algorithm~\ref{algorithm:2} with $d \geq 1$ in the \proglang{R} package \pkg{scatterdepth}, along with both approximate algorithms from Section~\ref{section:approximate algorithms}. The \proglang{R} package is available as part of the Supplementary Materials accompanying this paper. Its latest version is available at \url{https://github.com/NagyStanislav/scatterdepth}.

\vskip 0.1 in
\section{Numerical studies}
\paragraph{}
\vskip 0.1 in \label{Sec:Illustrations}

We conclude by presenting two numerical experiments to assess the performance of our \proglang{R} programs for the computation of \textSHD{}. We study the computational speed of the exact algorithm in Section~\ref{section:speed} as well as the quality and speed of the two approximate algorithms in Section~\ref{section:approximate speed}. The computations were performed on a modest laptop with processor Intel(R) Core(TM) i7-7500U CPU @2.70GHz (2 CPUs), $\sim$ 2.9GHz, 16.0 GB RAM.

\subsection{Speed of the exact algorithm} \label{section:speed}
\paragraph{}

We use simulated datasets of sizes $2^k$ with $k \in \left\{ 2, \dots, 14 \right\}$, amounting to sample sizes ranging from $n = 2^5 = 32$ to $n = 2^{14} = 16~384$, and dimensions $d \in \left\{ 2, \dots, 5 \right\}$. Each simulated dataset comes from the standard $d$-variate normal distribution. For each dataset, the scatter halfspace depth of the identity matrix $\I \in \R^{d \times d}$ centered at the origin $\bm 0_d \in \R^d$ is computed. The means and the standard deviations of the execution times taken over $20$ independent runs of our simulation setup can be found in Table~\ref{tab:times}. All runs whose mean execution times exceeded $5~000$ seconds (approximately $80$ minutes) were terminated, and the computation for higher sample sizes $n$ was not performed. As can be seen from Table~\ref{tab:times}, the exact algorithm is very fast in dimension $d=2$, where even the computation of \textSHD{} for more than $10~000$ points does not take longer than one second. With $d=3$, the computation becomes slower with $n > 1~000$, and takes already about $50$~minutes for $10~000$ points. With $d>3$, exact computation is feasible only for lower hundreds of data points.

\begin{table}[ht]
\centering
\resizebox{\textwidth}{!}{%
\begin{tabular}{c|cccc}
           & \multicolumn{1}{c}{                                 $d = 2$} & \multicolumn{1}{c}{                                 $d = 3$} & \multicolumn{1}{c}{                                 $d = 4$} & \multicolumn{1}{c}{                                 $d = 5$} \\ 
   \hline
$n = 32$    & 0.00070  \scriptsize{(0.00035)}      & 0.00121  \scriptsize{(0.00013)}      & 0.01574  \scriptsize{(0.00291)}      & 0.33318  \scriptsize{(0.05634)}      \\ 
  $n = 64$    & 0.00065  \scriptsize{(0.00009)}      & 0.00411  \scriptsize{(0.00056)}      & 0.16892  \scriptsize{(0.02830)}      & 6.89089  \scriptsize{(0.53821)}      \\ 
  $n = 128$   & 0.00126  \scriptsize{(0.00168)}      & 0.02094  \scriptsize{(0.00281)}      & 2.14470  \scriptsize{(0.24781)}      & 158.05592  \scriptsize{(12.29712)}   \\ 
  $n = 256$   & 0.00135  \scriptsize{(0.00018)}      & 0.13000  \scriptsize{(0.00756)}      & 24.24822  \scriptsize{(2.66352)}     & 3~781.61763  \scriptsize{(263.738)} \\ 
  $n = 512$   & 0.00261  \scriptsize{(0.00020)}      & 0.90816  \scriptsize{(0.06754)}      & 338.44751  \scriptsize{(20.2452)}   & ---                \\ 
  $n = 1~024$  & 0.00742  \scriptsize{(0.00041)}      & 6.87280  \scriptsize{(0.47802)}      & 4~924.46641  \scriptsize{(294.792)} & ---                \\ 
  $n = 2~048$  & 0.02533  \scriptsize{(0.00111)}      & 51.41536  \scriptsize{(2.45273)}     & ---                & ---                \\ 
  $n = 4~096$  & 0.09369  \scriptsize{(0.00510)}      & 443.62476  \scriptsize{(36.4937)}   & ---                & ---                \\ 
  $n = 8~192$  & 0.36278  \scriptsize{(0.01103)}      & 3~110.07089  \scriptsize{(174.922)} & ---                & ---                \\ 
  $n = 16~384$ & 1.39010  \scriptsize{(0.04574)}      & ---                & ---                & ---                \\ 
\end{tabular}
}
\caption{Means and standard deviations (in brackets) of execution times of the implementation of the exact algorithm in \proglang{R} package \code{scatterdepth} in seconds. The results depend on the size $n$ of the dataset and its dimension $d$. They are computed from $20$ independent runs; executions in setups hat took on average longer than $5~000$ seconds (approximately $80$~minutes) were halted.} 
\label{tab:times}
\end{table}

\subsection{Accuracy of approximate algorithms} \label{section:approximate speed}
\paragraph{}
 
Since exact computation of \textSHD{} becomes unfeasible for $d>3$ and thousands of data points, we run a second simulation study where also the two approximate algorithms from Section~\ref{section:approximate algorithms} are compared. The simulation setup is similar to that from Section~\ref{section:speed}. We generated $50$ independent replicates of $d$-dimensional standard normal samples of size $n$, with $d \in \left\{2, \dots, 5\right\}$ and $n = 2^k$, $k \in \left\{ 6, \dots, 10 \right\}$. In each setup, the exact depth \textSHD{} and both its approximations from Section~\ref{section:approximate algorithms} were computed for matrix $\I \in \R^{d \times d}$ centered at $\bm 0_d \in \R^d$. The respective computation times and resulting depth values were recorded. Both approximate algorithms are taken with parameter $N = 10^4$. 

\begin{table}[ht]
\centering
\resizebox{\textwidth}{!}{%
\begin{tabular}{cl|cccc}
            &                     & \multicolumn{1}{c}{                               $d = 2$} & \multicolumn{1}{c}{                               $d = 3$} & \multicolumn{1}{c}{                               $d = 4$} & \multicolumn{1}{c}{                               $d = 5$} \\ 
   \hline
$n = 64$   & \code{exact}       & 0.00062  \scriptsize{(0.00009)}      & 0.00406  \scriptsize{(0.00060)}      & 0.17148  \scriptsize{(0.02266)}      & 6.97294  \scriptsize{(0.63412)}      \\ 
             & \code{rdirections} & 0.00471  \scriptsize{(0.00618)}      & 0.00476  \scriptsize{(0.00073)}      & 0.00565  \scriptsize{(0.00065)}      & 0.00653  \scriptsize{(0.00110)}      \\ 
             & \code{rpoints}     & 0.05899  \scriptsize{(0.00735)}      & 0.10046  \scriptsize{(0.02268)}      & 0.14349  \scriptsize{(0.03644)}      & 0.21733  \scriptsize{(0.02032)}      \\ 
  \hdashline
  $n = 128$  & \code{exact}       & 0.00078  \scriptsize{(0.00004)}      & 0.02158  \scriptsize{(0.00354)}      & 1.99720  \scriptsize{(0.24853)}      & 163.786  \scriptsize{(10.4920)}   \\ 
             & \code{rdirections} & 0.00607  \scriptsize{(0.00214)}      & 0.00918  \scriptsize{(0.00544)}      & 0.01026  \scriptsize{(0.00144)}      & 0.01123  \scriptsize{(0.00102)}      \\ 
             & \code{rpoints}     & 0.05675  \scriptsize{(0.00788)}      & 0.12483  \scriptsize{(0.02884)}      & 0.16974  \scriptsize{(0.02397)}      & 0.25411  \scriptsize{(0.01184)}      \\ 
  \hdashline
  $n = 256$  & \code{exact}       & 0.00121  \scriptsize{(0.00007)}      & 0.12428  \scriptsize{(0.01167)}      & 25.2035  \scriptsize{(1.72585)}     & 3798.80  \scriptsize{(278.392)} \\ 
             & \code{rdirections} & 0.01213  \scriptsize{(0.01015)}      & 0.01491  \scriptsize{(0.00269)}      & 0.01906  \scriptsize{(0.00429)}      & 0.02052  \scriptsize{(0.00204)}      \\ 
             & \code{rpoints}     & 0.05953  \scriptsize{(0.00821)}      & 0.15061  \scriptsize{(0.02625)}      & 0.21078  \scriptsize{(0.01922)}      & 0.34689  \scriptsize{(0.03684)}      \\ 
  \hdashline
  $n = 512$  & \code{exact}       & 0.00253  \scriptsize{(0.00018)}      & 0.91853  \scriptsize{(0.07146)}      & 337.755  \scriptsize{(24.0302)}   & ---                \\ 
             & \code{rdirections} & 0.02183  \scriptsize{(0.00622)}      & 0.02828  \scriptsize{(0.00601)}      & 0.03527  \scriptsize{(0.00350)}      & 0.04050  \scriptsize{(0.00450)}      \\ 
             & \code{rpoints}     & 0.06942  \scriptsize{(0.02467)}      & 0.19226  \scriptsize{(0.01792)}      & 0.30083  \scriptsize{(0.04089)}      & 0.51847  \scriptsize{(0.03597)}      \\ 
  \hdashline
  $n = 1~024$ & \code{exact}       & 0.00685  \scriptsize{(0.00037)}      & 6.61904  \scriptsize{(0.44534)}      & 5049.42  \scriptsize{(323.863)} & ---                \\ 
             & \code{rdirections} & 0.04316  \scriptsize{(0.00565)}      & 0.05681  \scriptsize{(0.00337)}      & 0.07099  \scriptsize{(0.00945)}      & 0.08304  \scriptsize{(0.01335)}      \\ 
             & \code{rpoints}     & 0.08122  \scriptsize{(0.01978)}      & 0.29275  \scriptsize{(0.02596)}      & 0.48117  \scriptsize{(0.03766)}      & 0.89339  \scriptsize{(0.09794)}      \\ 
\end{tabular}
}
\caption{Means and standard deviations (in brackets) of execution times 
                      of the implementation of the three versions of our algorithm in 
                      \proglang{R} package \code{scatterdepth} in seconds 
                      (the \code{exact} Algorithm~\ref{algorithm:2} and the two 
                      approximate algorithms \code{rdirections} and \code{rpoints}
                      from Section~\ref{section:approximate algorithms}). 
                      The results depend on the size $n$ of the dataset and its 
                      dimension $d$. They are computed from $50$ independent runs; 
                      executions of the \code{exact} algorithm that took on average longer 
                      than $5~000$ seconds (approximately $80$~minutes) were halted. 
                      The corresponding assessment of precision of the approximation 
                      is in Table~\ref{tab:exactness}.} 
\label{tab:exactness_times}
\end{table}

In Table~\ref{tab:exactness_times} we see the means and the standard deviations of the computation times for all three versions of our procedure: \begin{enumerate*}[label=(\roman*)] \item the exact Algorithm~\ref{algorithm:2} (\code{exact}), \item the approximate algorithm from Section~\ref{section:approximate algorithms} based on uniformly sampled random directions (\code{rdirections}), and \item the approximate algorithm from Section~\ref{section:approximate algorithms} based on a subsampling idea (\code{rpoints}). \end{enumerate*}

\begin{table}[ht]
\centering
\begin{tabular}{cl|cccc}
            &                     & \multicolumn{1}{c}{                       $d = 2$} & \multicolumn{1}{c}{                       $d = 3$} & \multicolumn{1}{c}{                       $d = 4$} & \multicolumn{1}{c}{                       $d = 5$} \\ 
   \hline
$n = 64$   & \code{rdirections} & 0.00105  \scriptsize{(0.98)} & 0.03701  \scriptsize{(0.60)} & 0.12140  \scriptsize{(0.24)} & 0.27977  \scriptsize{(0.04)} \\ 
             & \code{rpoints}     & 0.00000  \scriptsize{(1.00)} & 0.00000  \scriptsize{(1.00)} & 0.00200  \scriptsize{(0.98)} & 0.07804  \scriptsize{(0.50)} \\ 
  \hdashline
  $n = 128$  & \code{rdirections} & 0.00000  \scriptsize{(1.00)} & 0.01884  \scriptsize{(0.50)} & 0.08271  \scriptsize{(0.02)} & 0.18487  \scriptsize{(0.00)} \\ 
             & \code{rpoints}     & 0.00000  \scriptsize{(1.00)} & 0.00000  \scriptsize{(1.00)} & 0.02166  \scriptsize{(0.52)} & 0.08690  \scriptsize{(0.04)} \\ 
  \hdashline
  $n = 256$  & \code{rdirections} & 0.00057  \scriptsize{(0.96)} & 0.01357  \scriptsize{(0.32)} & 0.05313  \scriptsize{(0.00)} & 0.23012  \scriptsize{(0.00)} \\ 
             & \code{rpoints}     & 0.00000  \scriptsize{(1.00)} & 0.00062  \scriptsize{(0.96)} & 0.02592  \scriptsize{(0.10)} & 0.18951  \scriptsize{(0.00)} \\ 
  \hdashline
  $n = 512$  & \code{rdirections} & 0.00054  \scriptsize{(0.92)} & 0.01230  \scriptsize{(0.12)} & 0.08038  \scriptsize{(0.00)} & \multirow{2}{*}{\textcolor{\ccol}{\textbf{2.12} \scriptsize{(0.72/0.18/0.10)}}}        \\ 
             & \code{rpoints}     & 0.00000  \scriptsize{(1.00)} & 0.00439  \scriptsize{(0.50)} & 0.06389  \scriptsize{(0.00)} &       \\ 
  \hdashline
  $n = 1~024$ & \code{rdirections} & 0.00020  \scriptsize{(0.94)} & 0.00942  \scriptsize{(0.02)} & 0.38901  \scriptsize{(0.00)} & \multirow{2}{*}{\textcolor{\ccol}{\textbf{3.44} \scriptsize{(0.78/0.08/0.14)}}}        \\ 
             & \code{rpoints}     & 0.00000  \scriptsize{(1.00)} & 0.00493  \scriptsize{(0.20)} & 0.38061  \scriptsize{(0.00)} &         \\ 
\end{tabular}
\caption{Mean relative differences in computed depths 
        compared to the exact \textSHD{} values, and mean
        percentage of exact evaluations (in brackets) for the two approximate algorithms (\code{rdirections} and \code{rpoints} from Section~\ref{section:approximate algorithms}) implemented in \proglang{R} package \code{scatterdepth}. In the last two cases $d=5$ and $n = 512, 1024$, the \code{exact} \textSHD{} was not computed. Instead, the mean difference of the approximate depths \code{rdirections}$-$\code{rpoints} is given (the number in \textbf{\textcolor{\ccol}{thick colored font}}) and the proportion of runs where the approximate depth \code{rdirections} was higher/equal/lower than \code{rpoints} (in brackets). The corresponding assessment of computation times is in Table~\ref{tab:exactness_times}.} 
\label{tab:exactness}
\end{table}

To assess the quality of the approximation, we consider the relative difference of the computed depths. The relative difference of the exact value \textSHD{} (computed using Algorithm~\ref{algorithm:2}) and its approximation \textSHDA{} (computed using one of the algorithms from Section~\ref{section:approximate algorithms}) is evaluated as
    \[  \frac{\mbox{\textSHDA{}} - \textSHD{}}{\textSHD{}}.   \]
This quantity is always non-negative. The smaller relative difference, the better the approximation algorithm. Mean relative differences over the $50$ performed runs in our simulation study are reported in Table~\ref{tab:exactness}. With them, we provide the percentage of independent runs in which \textSHDA{} $=$ \textSHD{}, that is the proportion of runs where an exact result was obtained. In the case $d=5$ with $n \in \left\{ 512, 1024 \right\}$, when the \code{exact} \textSHD{} was not computed, we present the mean difference between the obtained approximate depths \code{rdirections} and \code{rpoints}, together with the proportion of runs when \code{rdirections} gave a higher/equal/lower result than \code{rpoints}. Positive values of the mean difference of approximate depths favor the method \code{rpoints}. 

Summarizing the results in Tables~\ref{tab:exactness_times} and~\ref{tab:exactness}, the algorithm \code{rdirections} is only slightly faster, but significantly less precise than \code{rpoints}. Interestingly, also in higher dimension the computational cost of \code{rpoints} is negligible; even for $1~000$ data points in dimension $d=5$, \code{rpoints} terminates in under one second. We conclude that the approximate algorithm \code{rpoints} based on our exact procedure is to be favored when the exact computation of \textSHD{} is not attainable. All the results of our simulation studies agree with our analysis from Section~\ref{section:approximate algorithms}, and also concur with the findings of \cite{Nagy_etal2020} for the location halfspace depth.

\appendix

\section{Proof of Theorem~\ref{theorem:main}}   \label{section:proof main}

The exact computation of \textSHD{} amounts to minimizing an objective function $h \colon \Sph \to \left\{0,1, \dots, n \right\}$ from~\eqref{equation:objective function} over $\Sph$. In Section~\ref{section:tangent hyperplanes} we give the first simplification: it is enough to minimize $h$ over the set of neighborhoods of spherical boundaries\footnote{For a subset $A \subset \Sph$ of the unit sphere, the \emph{spherical boundary} of $A$ is defined as
    \[ \bd{\left\{ \lambda\,\x \in \R^d \colon \lambda \in (0,\infty) \mbox{ and }\x \in A \right\}} \cap \Sph. \]
Spherical interior and spherical neighborhood are defined analogously.} $\dirs \subset \Sph$ of certain distinctive spherical polytopes called shells. This is true because the shells partition the unit sphere into regions where the value of $h$ is constant. The shells are bounded by circles and anti-circles (see~\eqref{equation:circle}), subsets of the unit sphere where the tangent hyperplane to $\Sph$ passes through a given data point or its antipodal reflection, respectively. Shells and (anti-)circles are considered in detail in Section~\ref{section:shells}. In Section~\ref{section:tilting} we use shells to reach a second substantial simplification of our minimization task. Instead of minimizing $h$ over neighborhoods of $\dirs$, we modify $h$ slightly, and minimize that new function only over a finite number of elements from $\dirs$. Before identifying such elements, in Section~\ref{section:intersections} we provide an auxiliary lemma on the dimensionality of intersections of (anti-)circles, and finally use all the previous results in Section~\ref{section:maximal tangent hyperplanes} to prove Theorem~\ref{theorem:main}.

\subsection{Searching through tangent hyperplanes} \label{section:tangent hyperplanes}

As the dataset $\Y$ is finite, the objective function $h(\u)$ takes only finitely many different values among $\left\{ 0, 1, \dots, n \right\}$. Moreover, both mappings making up function $h$ in~\eqref{equation:objective function}
    \[
    \u \mapsto \#\left\{ i \in \{ 1, \dots, n\} \colon \left\vert \u\tr \y_i \right\vert \leq 1 \right\}, \quad \mbox{ and }\quad \u \mapsto \#\left\{ i \in \{ 1, \dots, n\} \colon \left\vert \u\tr \y_i \right\vert \geq 1 \right\},
    \]
are locally constant at all $\u \in \Sph$, except for those directions $\u$ for which there is at least one index $i\in \{1,\dots, n\}$ such that $|\u\tr \y_i|=1$. Geometrically, if there is no data point $\y_i$ lying on one of the two boundary hyperplanes $H_{\u}$, $H_{-\u}$ of the slab $\slab{\u}$, there exists a spherical neighborhood 
    \begin{equation}\label{equation:spherical ball} 
    B(\u;\epsilon) = \left\{ \vv \in \Sph \colon  \lVert \u-\vv \rVert < \epsilon \right\}
    \end{equation}
of $\u$ in $\Sph$ for $\epsilon > 0$ small such that for all $\vv \in B(\u;\epsilon)$, the value of $h(\vv)$ is constant.

The collection of all directions $\u \in \Sph$ such that there exists a point $\y_i$ in the boundary of $\slab{\u}$ will be crucial. We denote the set of all such directions by $\dirs$. Note that $\u\in\dirs$ if and only if the tangent hyperplane $H_{\u}$ from~\eqref{equation:Hu} to $\Sph$ with normal vector $\u$, or its reflection $-H_{\u} = H_{-\u}$, contains at least one data point from $\Y$. Equivalently, $\u \in \dirs$ if and only if $H_{\u}$ contains at least one point from $\Y^*$ from Definition~\ref{definition:MTH}. Consequently, it is enough to consider only directions from $B(\u;\epsilon)$ from~\eqref{equation:spherical ball} for all $\u \in \dirs$ and any $\epsilon>0$ when computing \textSHD{}. In other words, to evaluate the infimum of $h(\u)$ from \eqref{equation:objective function} over $\u \in \Sph$, it suffices to consider some neighborhoods of all the points $\u\in\dirs$. We obtain our first simplification of \eqref{equation:scatter depth} in the form
    \begin{equation}    \label{computation formula}
    \SHD\left(\Y \right) = \inf_{\u\in\dirs} f_{\epsilon}(\u) \quad \mbox{ for any }\epsilon >0,
    \end{equation}
where 
    \begin{equation}    \label{fv}
    f_{\epsilon}(\u) = \inf_{\vv\in B(\u;\epsilon)} h(\vv)
    \end{equation}
denotes the infimum of the considered masses over all directions from the neighborhood $B(\u;\epsilon)$. 

\subsection{Spherical shells and (anti-)circles}   \label{section:shells}

For $\u \in \Sph$ we need to determine for each $i\in\{1,\dots,n\}$ whether $\y_i$ belongs to sets $\slab{\u}$ and $\cslab{\u}$, respectively. The set $\cslab{\u}$ is a union of a closed halfspace $H^+_{\u}=\left\{\bm x \in\R^d\colon \u\tr \bm x \ge 1\right\}$ and its reflection $H_{\u}^- = -H^+_{\u}=\left\{\bm x \in\R^d\colon \u\tr \bm x \le -1\right\}$ around the origin.

If $\left\Vert \y_i \right\Vert \leq 1$, certainly $\y_i \in \slab{\u}$ for all $\u \in \Sph$, as follows for example by the Cauchy-Schwarz inequality $|\u\tr \y_i| \leq \left\Vert \u \right\Vert \left\Vert \y_i \right\Vert \leq 1$. Consider therefore the case $\left\Vert \y_i \right\Vert \geq 1$. We have that $\y_i\in H^+_{\u}$ is equivalent with $\u\tr \y_i \ge 1$. Therefore, the set of those $\u \in \Sph$ such that $\y_i \in H^+_{\u}$ is the intersection of the unit sphere $\Sph$ with the halfspace $\left\{ \x \in \R^d \colon \x\tr \y_i \geq 1 \right\}$, whose boundary passes through the point $\y_i/\left\Vert \y_i \right\Vert^2$ and is orthogonal to the vector $\y_i$. That is a cap of the unit sphere; we denote it by
    \begin{equation}\label{equation:cap}
    S_i = \Sph \cap \left\{ \x \in \R^d \colon \x\tr \y_i \geq 1 \right\}. 
    \end{equation}  
Analogously, $\y_i \in H^-_{\u}$ if and only if $\left\Vert \y_i \right\Vert \geq 1$ and $\u \in -S_i$. Denoting further by
    \begin{equation}\label{equation:segment}
    C_i = \Sph \cap \left\{ \x \in \R^d \colon \left\vert \x\tr \y_i \right\vert \leq 1 \right\} 
    \end{equation}
the closed spherical segment of $\Sph$ complementary to $S_i \cup (-S_i)$, we have that     
    \begin{equation}
    \begin{aligned}
    \y_i \in \slab{\u} \mbox{ if and only if }\u \in C_i, \\
    \y_i \in \cslab{\u} \mbox{ if and only if }\u \in S_i \cup (- S_i).
    \end{aligned}
    \end{equation}
The sets $S_i$, $-S_i$, and $C_i$ cover the unit sphere. A point $\y_i$ lies in the boundary of the slab $\slab{\u}$, that is in both sets $\slab{\u}$ and $\cslab{\u}$ at the same time, if and only if $\u$ is in the spherical boundary of the set $C_i$, see Figures~\ref{fig:d2} and~\ref{fig:d3}. 

The triples of sets $S_i$, $-S_i$, and $C_i$, for all $i$ such that $\left\Vert \y_i \right\Vert > 1$, segment the sphere $\Sph$. For $d=2$ this segmentation amounts to a simple partition of the unit $1$-sphere $\Sph[1]$ into circular intervals, as can be observed in Figure~\ref{fig:d2}. For $d=3$, the spherical boundary of $S_i$ is a $1$-sphere (a circle) on the unit sphere $\Sph[2]$, see Figure~\ref{fig:d3}. For the general situation $d\geq 2$ we obtain that the spherical boundary of $S_i$ is the intersection of a hyperplane with $\Sph$ in $\R^d$. That set can be written as the $(d-2)$-sphere (a $0$-sphere or two points in $\R^2$, a $1$-sphere in $\R^3$, or a $2$-sphere in $\R^4$) centered at the point $\y_i/\left\Vert \y_i \right\Vert^2$ with radius $\sqrt{1 - \left\Vert \y_i \right\Vert^{-2}}$,
lying on the surface of the unit sphere $\Sph$. In what follows it will be useful to identify a point $\y_i$ such that $\left\Vert \y_i \right\Vert > 1$ with the spherical boundary of the corresponding cap $S_i$. In~\eqref{equation:circle} in Section~\ref{section:main theorem}, the spherical boundary of $S_i$ was called the \emph{circle} of $\y_i$, and the spherical boundary of $-S_i$ the \emph{anti-circle} of $\y_i$. Each (anti-)circle is thus a $(d-2)$-sphere lying in the unit sphere $\Sph$.

Considering the whole (transformed) random sample $\Y$ we see that the circles and anti-circles (or, equivalently the sets $S_i$, $-S_i$, $C_i$, for $i=1,\dots,n$ such that $\left\Vert \y_i \right\Vert > 1$) segment the unit sphere $\Sph$ into a finite number of spherical polytopes. We call these regions spherical \emph{shells}.

\begin{remark}
A direction $\vv\in\Sph$ belongs to $\dirs$ if and only if it lies on any (anti-)circle of $\y_i$ for some $i\in\{1,\dots,n\}$. Consequently, the spherical boundary of every shell is composed of points from $\dirs$.
\end{remark}

\subsection{Exploring the shells: Vertices of the shells and tilting} \label{section:tilting}

Under our assumption of general position\footnote{General position of $\X$ w.r.t. the ellipsoid~\eqref{equation:Mahalanobis} is equivalent with general position of $\Y$ w.r.t. $\Sph$.} of $\Y$ w.r.t. $\Sph$ from Definition~\ref{definition:general position}, the value of the function $f_{\epsilon}$ from~\eqref{computation formula} simplifies substantially. Using this simplification, only finitely many directions from $\dirs$ need to be considered to obtain the exact \textSHD{}.

\begin{lemma}   \label{lemma:tilting}
For any $\u \in \Sph$ and $\epsilon > 0$ small enough we can write function $f_{\epsilon}$ from~\eqref{fv} as
    \begin{equation}\label{comp}
    f_{\epsilon}(\u) = \min\left\{ \#\left\{ i \in \{ 1, \dots, n\} \colon \left\vert \u\tr \y_i \right\vert < 1 \right\}, \#\left\{ i \in \{ 1, \dots, n\} \colon \left\vert \u\tr \y_i \right\vert > 1 \right\} \right\}.  
    \end{equation}
In particular, there exists $\epsilon_0 > 0$ small enough so that for any $\epsilon \in (0,\epsilon_0)$ the right hand side of~\eqref{comp} does not depend on $\epsilon$. Let $\fdirs$ be a finite subset of $\dirs$ such that for any shell $S$ of $\Sph$ there exists an element of $\fdirs$ in the spherical boundary of $S$. Then we can write
    \begin{equation}\label{equation:computation formula 2} 
    \begin{aligned}
    \SHD\left( \Y \right) & = \min_{\u \in \fdirs} f_{\epsilon}(\u) \\
    & = \min_{\u \in \fdirs} \min\left\{ \#\left\{ i \in \{ 1, \dots, n\} \colon \left\vert \u \tr \y_i \right\vert < 1 \right\}, \#\left\{ i \in \{ 1, \dots, n\} \colon \left\vert \u \tr \y_i \right\vert > 1 \right\} \right\}.
    \end{aligned}    
    \end{equation}
\end{lemma}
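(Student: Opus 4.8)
The plan is to prove the two displays in Lemma~\ref{lemma:tilting} separately, starting with~\eqref{comp}. Recall from~\eqref{fv} that $f_\epsilon(\u) = \inf_{\vv \in B(\u;\epsilon)} h(\vv)$, and that $h(\vv)$ is the minimum of the two counts $\#\{i \colon |\vv\tr\y_i| \le 1\}$ and $\#\{i \colon |\vv\tr\y_i| \ge 1\}$. Fix $\u \in \Sph$. Partition the indices $\{1,\dots,n\}$ into three groups according to whether $|\u\tr\y_i| < 1$, $|\u\tr\y_i| = 1$, or $|\u\tr\y_i| > 1$; call these index sets $I_<$, $I_=$, $I_>$. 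For $i \in I_< \cup I_>$, strict inequality is preserved on a whole spherical neighborhood of $\u$ by continuity of $\vv \mapsto \vv\tr\y_i$, so there is $\epsilon_0 > 0$ such that for all $\vv \in B(\u;\epsilon_0)$ we have $|\vv\tr\y_i| < 1$ for $i \in I_<$ and $|\vv\tr\y_i| > 1$ for $i \in I_>$. For $i \in I_=$, the point $\y_i$ lies on the boundary hyperplane $H_{\u}$ or $H_{-\u}$, i.e. $\u$ lies on the spherical boundary of the segment $C_i$ from~\eqref{equation:segment}; by tilting $\u$ slightly we can push $\y_i$ either strictly inside the slab or strictly outside it. The key claim is that there exists a single direction $\vv^*$, arbitrarily close to $\u$, for which \emph{all} indices $i \in I_=$ simultaneously satisfy $|\vv^*{}\tr\y_i| > 1$; in other words, we can tilt so that every currently-on-the-boundary point moves to the outside of the slab. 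Granting this claim, at $\vv^*$ the first count equals $\#I_<$ and the second count equals $\#I_> + \#I_=$, so $h(\vv^*) = \min\{\#I_<,\, \#I_> + \#I_=\}$, which is exactly the right-hand side of~\eqref{comp}; and since no direction in any neighborhood of $\u$ can do better (any $\vv$ near $\u$ still has $|\vv\tr\y_i| < 1$ for $i \in I_<$, forcing the first count to be at least $\#I_<$, and also $|\vv\tr\y_i| > 1$ for $i \in I_>$ while the $I_=$ points contribute at most $\#I_=$ more to the second count, so the second count is at most $\#I_> + \#I_=$ — wait, we need a lower bound on $h$, so: the second count of $\vv$ is $\#I_> + \#\{i \in I_= \colon |\vv\tr\y_i| \ge 1\} \le \#I_> + \#I_=$ does not immediately give a \emph{lower} bound; instead note the first count of $\vv$ is $\ge \#I_<$ and the second count is $\ge \#I_>$, and one checks via the tilting direction that the minimum cannot be pushed below $\min\{\#I_<, \#I_> + \#I_=\}$ because whenever the second count would drop, some $I_=$ point must have entered the interior of the slab, but then a complementary tilt shows $h$ at that point is still bounded below appropriately — this monotone-in-$\epsilon$ argument gives the stated formula). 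The existence of $\epsilon_0$ independent of the remaining small $\epsilon$ is then immediate, since the right-hand side of~\eqref{comp} has no $\epsilon$ in it.

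For the existence of the simultaneous tilting direction $\vv^*$, I would argue as follows. The indices in $I_=$ correspond to data points lying on $H_{\u} \cup H_{-\u}$, the boundary of $\slab{\u}$; equivalently, $\u$ lies on the intersection of the (anti-)circles of the points $\{\y_i \colon i \in I_=\}$. We want a direction $\vv$ near $\u$ that lies strictly outside \emph{all} the corresponding caps $S_i \cup (-S_i)$ — no, strictly outside all the segments $C_i$, i.e. in $\bigcap_{i \in I_=} \big(\Sph \setminus C_i\big)$. Geometrically this asks: can we perturb $\u$ so that it moves into the spherical-interior of each cap $S_i$ or $-S_i$ simultaneously? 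Under the general-position assumption w.r.t. $\Sph$ (Definition~\ref{definition:general position}), the (anti-)circles through $\u$ cross $\Sph$ transversally and their common intersection near $\u$ is a smooth submanifold of positive codimension; the union of all the caps $S_i, -S_i$ has nonempty interior meeting every neighborhood of $\u$ (since $\u$ is a boundary point of each $C_i$, hence a boundary point of $\Sph\setminus C_i = \intr(S_i)\cup\intr(-S_i)$ relative to $\Sph$), and the finitely many open conditions "$\vv \in \intr(S_i)$ or $\vv\in\intr(-S_i)$" are compatible near $\u$ because the normals $\y_i/\|\y_i\|^2$ point in "general" directions — this is precisely what general position w.r.t. $\Sph$ buys us. I would make this rigorous by choosing $\vv^* = \u + \delta\, \r$ (renormalized to $\Sph$) for small $\delta > 0$ and a suitable unit vector $\r$: the sign of $\vv^*{}\tr\y_i - 1$ (or $\vv^*{}\tr\y_i + 1$) for small $\delta$ is governed by the sign of $\r\tr\y_i$, and general position guarantees we can pick $\r$ avoiding the finitely many hyperplanes $\{\r\tr\y_i = 0\}$ while lying in the appropriate half-space for each $i \in I_=$ to push that point outward.

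For the second display~\eqref{equation:computation formula 2}: the shells (from Section~\ref{section:shells}) partition $\Sph$ into finitely many spherical polytopes on whose interiors $h$ is constant, and $\dirs$ is the union of all shell boundaries. By~\eqref{computation formula}, $\SHD(\Y) = \inf_{\u\in\dirs} f_\epsilon(\u)$. The function $f_\epsilon(\u)$, by~\eqref{comp}, depends on $\u$ only through which strict inequalities $|\u\tr\y_i| \lessgtr 1$ hold — in particular its value is determined by the set of shells whose closure contains $\u$, and more precisely, for $\u$ on the boundary of a shell $S$, the value $f_\epsilon(\u)$ equals $h$ evaluated at an interior point obtained by tilting $\u$ outward. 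Crucially, \emph{all} vertices/boundary points of a given shell yield the same value of $f_\epsilon$ when we tilt into that shell, namely $h$ restricted to that shell's interior — no, that is not quite it, since $f_\epsilon(\u)$ is the infimum over the neighborhood and $\u$ may border several shells. The clean statement is: for each shell $S$, and each $\u$ in the spherical boundary of $S$, $f_\epsilon(\u) \le h|_{\intr S}$, with equality for the "outermost" tilt; and every direction we'd want to consider lies on the boundary of \emph{some} shell. Therefore $\inf_{\u\in\dirs} f_\epsilon(\u)$ is attained already among the finitely many shells, and picking one boundary point $\u \in \fdirs$ per shell suffices: $\min_{\u\in\fdirs} f_\epsilon(\u) = \min_{\text{shells } S} h|_{\intr S} = \inf_{\u\in\dirs} f_\epsilon(\u) = \SHD(\Y)$. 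I then substitute~\eqref{comp} for each $f_\epsilon(\u)$ to get the explicit counting formula on the right-hand side of~\eqref{equation:computation formula 2}.

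The main obstacle is the simultaneous outward-tilting claim in the first paragraph: showing that a single perturbation direction $\r$ can push \emph{every} data point currently on the slab boundary to the strict outside of the slab. Intuitively one expects this because general position w.r.t. $\Sph$ rules out the degenerate configurations that would obstruct it, but spelling out the right half-space / transversality conditions on $\r$ and verifying that general position w.r.t. $E$ (equivalently w.r.t. $\Sph$) is exactly the hypothesis that makes them satisfiable is the delicate part. I would isolate this as the core lemma, handling the linear-algebra of the conditions $\operatorname{sign}(\r\tr\y_i)$ for $i \in I_=$ via the fact that general position forces the relevant normal vectors to be in "sufficiently general" position inside the hyperplane $H_{\u}$, so that the system of strict sign constraints is consistent.
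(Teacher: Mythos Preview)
There is a genuine error in your first paragraph. You correctly set up the partition $I_<, I_=, I_>$ and correctly compute that your outward tilt $\vv^*$ gives $h(\vv^*) = \min\{\#I_<,\, \#I_> + \#I_=\}$. But you then assert that this ``is exactly the right-hand side of~\eqref{comp}''. It is not: the right-hand side of~\eqref{comp} is $\min\{\#I_<,\, \#I_>\}$, with \emph{both} counts strict. When $\#I_> < \#I_<$ and $I_= \ne \emptyset$, your single tilt overshoots the target, and your subsequent attempt to argue a matching lower bound (``the minimum cannot be pushed below $\min\{\#I_<, \#I_> + \#I_=\}$'') is aimed at the wrong value and is in fact false.

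The repair is exactly what the paper does: use \emph{two} tilts, not one. A single direction $\w$ parallel to $H_{\u}$ is chosen so that the closed half of $H_{\u}$ with outer normal $\w$ and $\u$ on its relative boundary contains no point of $\Y^*$; this exists because under general position w.r.t.\ $\Sph$ the at most $d-1$ points of $\Y^* \cap H_{\u}$ together with $\u$ are affinely independent in $H_{\u}$, so one can take the hyperplane through $\u$ parallel to the affine hull of those points. Tilting to $\vv = (\u + t\w)/\lVert \u + t\w\rVert$ pushes every boundary point strictly outside the slab (your $\vv^*$), giving $h(\vv) \le \#I_<$; tilting in the \emph{opposite} direction $\vv = (\u - t\w)/\lVert \u - t\w\rVert$ pushes every boundary point strictly \emph{inside}, giving $h(\vv) \le \#I_>$. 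Together these yield $f_\epsilon(\u) \le \min\{\#I_<, \#I_>\}$, and the easy lower bound (first count $\ge \#I_<$, second count $\ge \#I_>$ for all $\vv$ near $\u$, hence $h(\vv) \ge \min\{\#I_<, \#I_>\}$) finishes~\eqref{comp}. Your sign-constraint construction of $\r$ is heading toward the same $\w$, but note that the inward tilt is just $-\w$, so no separate feasibility argument is needed for it.

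Your argument for~\eqref{equation:computation formula 2} is essentially correct once~\eqref{comp} is in hand: for the shell $S^*$ on which $h$ attains $\SHD(\Y)$, any $\u \in \fdirs$ on its boundary has $f_\epsilon(\u) \le h|_{\intr S^*} = \SHD(\Y)$, and the reverse inequality $f_\epsilon(\u) \ge \SHD(\Y)$ is immediate from the definitions.
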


\begin{proof}
Take any direction $\u \in \Sph$ and consider those data points from $\Y$ that lie in the interior of the slab $\slab{\u}$. Since there are only finitely many of them, there must exist $\epsilon>0$ such that for any direction $\vv \in \Sph$ that belongs to the neighborhood $B(\u;\epsilon)$ of $\u$ from \eqref{equation:spherical ball}, the interior of $\slab{\vv}$ contains the same data points as the interior of $\slab{\u}$. In addition, using the same argument, we can select $\epsilon>0$ such that also for all $\vv\in B(\u;\epsilon)$ all the data points outside $\slab{\u}$ coincide with the data points outside $\slab{\vv}$. Finally, we also take $\epsilon < \sqrt{2}$ which guarantees that for all $\vv \in B(\u;\epsilon)$ we have $\vv\tr \u > 0$. 

Take now $\vv \in B(\u;\epsilon)$ and examine what happens with the points on the boundary of $\slab{\u}$ if we tilt $\u$ to $\vv \ne \u$. The intersection 
    \[  H_{\u} \cap \slab{\vv} = \left\{ \x \in \R^d \colon \x\tr \u = 1 \mbox{ and }\left\vert \x\tr \vv \right\vert \leq 1 \right\}  \]
is a $(d-1)$-dimensional relatively closed\footnote{A set is \emph{relatively closed} if it is closed when considered as a subset of its affine hull. Analogously, the \emph{relative boundary} of $A$ is the boundary of $A$ when considered as a subset of its affine hull.} slab inside the hyperplane $H_{\u}$. Its $(d-2)$-dimensional boundaries, when considered inside the hyperplane $H_{\u}$, are orthogonal to the projection of the vector $\vv - \u$ into $H_{\u}$. The unit normal of the $(d-1)$-dimensional slab $H_{\u} \cap \slab{\vv}$ with $\vv \ne \u$ is given by the vector
    \[  \n(\u,\vv) = \frac{\vv - \left( \vv\tr \u \right) \u}{\left\Vert \vv - \left( \vv\tr \u \right) \u \right\Vert} = \frac{\vv - \left( \vv\tr \u \right) \u}{\sqrt{1 -  \left(\vv\tr \u \right)^2}} \in \Sph \]
parallel with $H_{\u}$. For a visualization of our situation with $d=3$ see Figure~\ref{figure:tilt}. As the distance between $\u$ and $\vv$ decreases, the inner product $\vv\tr\u$ tends to $1$, and the slab $H_{\u} \cap \slab{\vv}$ becomes wider in the direction $-\n(\u,\vv)$. Indeed, we have that for any $\x \in H_{\u} \cap \slab{\vv}$ we can bound
    \[  \frac{-1 - \vv\tr\u}{\sqrt{1 - \left(\vv\tr \u \right)^2}} \leq \x\tr\left(\n(\u,\vv)\right) = \frac{\x\tr\vv - \left(\vv\tr\u\right) \left(\x\tr\u\right)}{\sqrt{1 - \left(\vv\tr \u \right)^2}} \leq \frac{1 - \vv\tr\u}{\sqrt{1 - \left(\vv\tr \u \right)^2}},  \]
and as $\vv\tr\u \to 1$ we have that the left hand side diverges to $-\infty$, while the right hand side approaches $0$ from the right. Therefore, as $\vv_n \to \u$, the intersection $H_{\u} \cap \slab{\vv_n}$ forms a sequence of slabs \begin{enumerate*}[label=(\roman*)] \item containing $\u$, \item with boundaries orthogonal to $\n(\u,\vv_n)$, \item whose one boundary approaches the point $\u$ and the other tends to infinity, \end{enumerate*} see Figure~\ref{figure:tilt}. Because both $\slab{\vv}$ and $H_{\u} \cup H_{-\u}$ are symmetric around the origin, a data point $\y_i$ lies inside $H_{-\u} \cap \slab{\vv}$ if and only if its reflection $-\y_i$ lies inside $H_{\u} \cap \slab{\vv}$. 

\begin{figure}
    \centering
    \includegraphics[width=.45\textwidth]{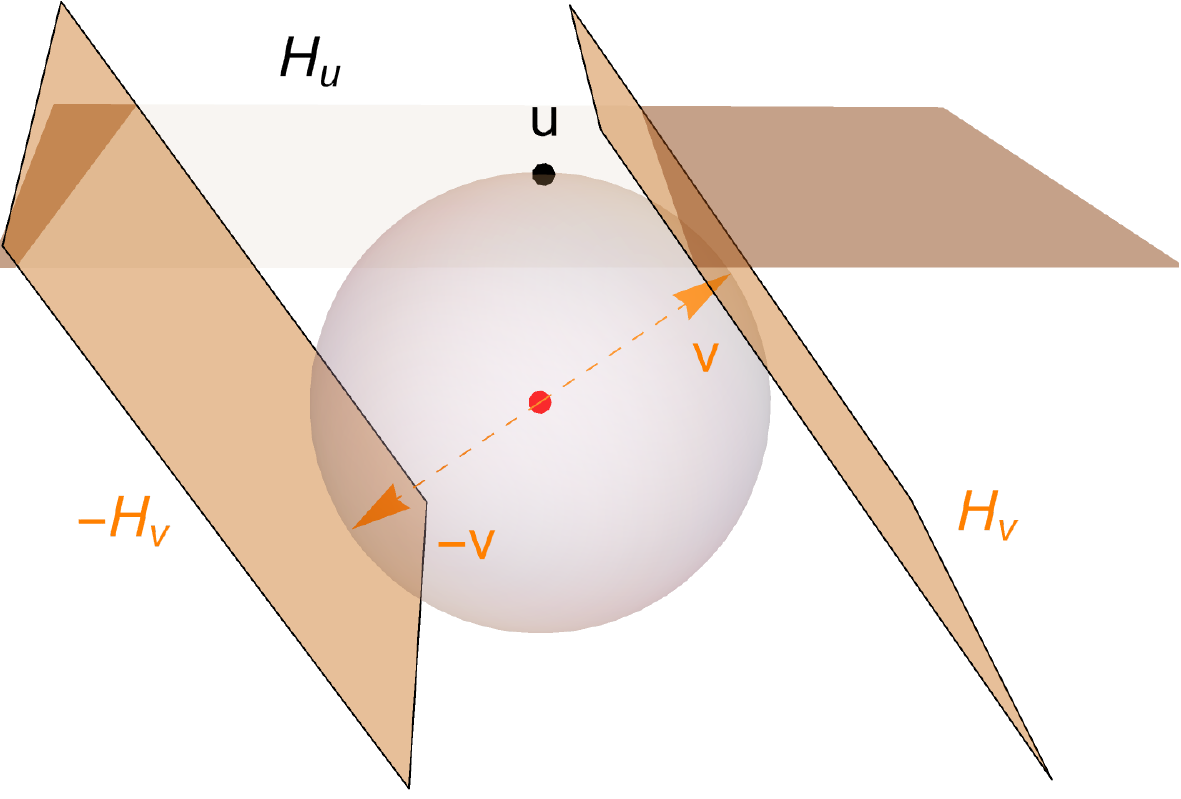} \quad \includegraphics[width=.40\textwidth]{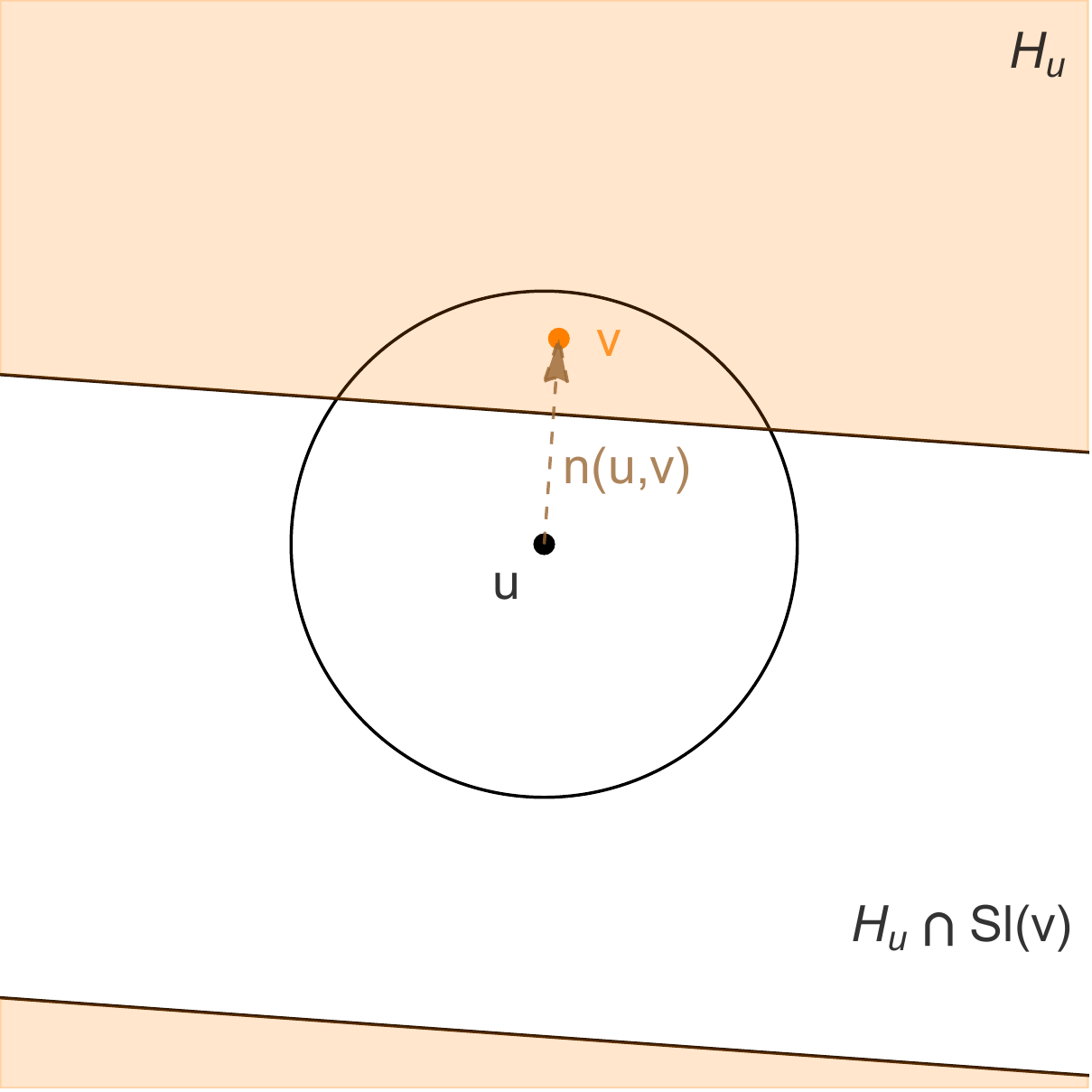}
    \caption{Proof of Lemma~\ref{lemma:tilting}: The effect of a slight tilt of the slab $\slab{\u}$ to $\slab{\vv}$ for $d=3$. In the left hand panel we see the intersection of the boundary plane $H_{\u}$ with $\slab{\vv}$. The region of intersection is a slab in the two-dimensional space $H_{\u}$ (right hand figure), whose boundary is orthogonal to $\n(\u,\vv)$, the projection of the difference $\vv - \u$ into $H_{\u}$. In the right hand panel the points indicated as $\u$ and $\vv$ represent the orthogonal projections of $\u$ and $\vv$ into $H_{\u}$, respectively. As $\vv$ approaches $\u$, the slab $H_{\u} \cap \slab{\vv}$ has one of the boundaries approaching $\u$, and the other one passing to infinity in the direction $-\n(\u,\vv)$. In the limit, we obtain a halfspace in $H_{\u}$ whose boundary passes through $\u$.}
    \label{figure:tilt}
\end{figure}

Because of our assumption from Definition~\ref{definition:general position}, we know that among the points $\y_1, \dots, \y_n, - \y_1$, $\dots, - \y_n$ from $\Y^*$, there are at most $d-1$ points inside $H_{\u}$, and these points together with $\u$ lie in general position. In particular, there exists a closed $(d-1)$-dimensional halfspace inside the hyperplane $H_{\u}$ with $\u$ in its relative boundary that contains no point from $\Y^*$. Denote by $\w \in \Sph$ the outer unit normal of that halfspace. The direction $\w$ is parallel with $H_{\u}$. We select two particular directions:
    \begin{itemize}
        \item Choosing $\vv = (\u + t\,\w)/\lVert\u + t\,\w\rVert$ for $t > 0$ small enough, we can guarantee that inside the slab $\slab{\vv}$ we find precisely all data points $\y_i$ that lie in the interior of $\slab{\u}$ and no points from the boundary of $\slab{\u}$.
        \item Taking $\vv = (\u - t\,\w)/\lVert\u - t\,\w\rVert$ for $t > 0$ small enough, we have that all points from the interior and the boundary of $\slab{\u}$ are inside the interior of $\slab{\vv}$, meaning that the complementary slab $\cslab{\vv}$ contains exactly those points that are not contained in $\slab{\u}$.
    \end{itemize} 
Summarizing our observations, we conclude that the minimum of the number of points from $\Y \cap \slab{\vv}$ and the number of points from $\Y \cap \cslab{\vv}$ over $\vv\in B(\u;\epsilon)$ (or equivalently the minimum of $h(\vv)$ over $\vv\in B(\u;\epsilon)$), for any $\epsilon>0$ small enough, is equal to the minimal value of the number of data points from $\Y$ that are contained in the interior of the slab $\slab{\u}$ and the number of data points from $\Y$ that are outside $\slab{\u}$. We have shown~\eqref{comp}.

As for our second claim, from our construction we see that this minimum is attained at some $\vv\in B(\u;\epsilon)\setminus\{ \u\}$ if $H_\u$ contains some points from $\Y^*$ (or equivalently, if $\u\in\dirs$), because we have shown that there exists a tilt of $\u$ to $\vv$ such that the points from the boundary of $\slab{\u}$ can be safely ignored in the computation of $h(\vv)$. That means that the infimum in $\inf_{\vv \in B(\u;\epsilon)} h(\vv)$ is always attained at some direction that belongs to the spherical interior of some shell. As the value of $h(\vv)$ is constant inside the interior of any shell, it is enough to pick a single direction $\u$ from the spherical boundary of every shell in formula~\eqref{computation formula}, instead of considering all the directions from $\dirs$. This brings about our second substantial simplification of the expression for the sample scatter halfspace depth --- instead of~\eqref{equation:scatter depth} or~\eqref{computation formula}, it can be written as~\eqref{equation:computation formula 2}. 
\end{proof}

In Lemma~\ref{lemma:tilting} we found that the minimization in \textSHD{} reduces to finding a minimum of a quite simple expression~\eqref{comp} over any finite set of directions $\fdirs \subset \dirs$ such that each spherical shell generated by $\Y$ contains at least one $\u \in \fdirs$ on its spherical boundary. It remains to construct a set $\fdirs$ with this property. We find $\fdirs$ among the maximal directions from Definition~\ref{definition:MTH}. Before doing so in Section~\ref{section:maximal tangent hyperplanes}, an auxiliary lemma will be useful.

\subsection{Dimensionality of intersections of (anti-)circles}   \label{section:intersections}

Under our condition $\Y$ in general position w.r.t. $\Sph$, intersections of (anti-)circles are always ``non-degenerate'' spheres in an appropriate lower-dimensional affine subspace. 

\begin{lemma}   \label{lemma:intersection}
The intersection of $k \in \left\{2, \dots, d-1 \right\}$ distinct (anti-)circles is either an empty set, or a $(d-k-1)$-sphere inside $\Sph$. 
\end{lemma}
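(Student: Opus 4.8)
The plan is to translate the statement into linear algebra and then let the general-position assumption do the work. I would relabel the $k$ (anti-)circles as the circles/anti-circles of pairwise distinct points $\a_1, \dots, \a_k \in \Y^*$, all of norm larger than $1$; replacing $\a_j$ by $-\a_j$ whenever it is an anti-circle (which does not affect the argument), the intersection in question becomes $\Sph \cap L$, where $L = \left\{ \x \in \R^d \colon \x\tr \a_j = 1 \text{ for all } j = 1, \dots, k \right\}$ is an intersection of $k$ affine hyperplanes. If $\Sph \cap L = \emptyset$ we are in the first alternative; so I would assume $\Sph \cap L \neq \emptyset$ and fix a point $\u \in \Sph \cap L$ for the rest of the argument.

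The crucial step is to note that $\u\tr\a_j = 1$ for every $j$ means exactly that all the points $\a_1, \dots, \a_k$ lie on the tangent hyperplane $H_\u$ from~\eqref{equation:Hu}; since $\u$ has unit norm it equals none of the $\a_j$. Hence $\left\{ \u, \a_1, \dots, \a_k \right\}$ is a subset, of cardinality $k + 1 \leq d$, of the finite set $\left( H_\u \cap \Sph \right) \cup \left( H_\u \cap \Y^* \right)$, which is in general position inside the $(d-1)$-dimensional affine space $H_\u$ by the Assumption (Definition~\ref{definition:general position}). General position is inherited by subsets, and any set of at most $d$ points in general position inside $H_\u$ is affinely independent; so $\left\{ \u, \a_1, \dots, \a_k \right\}$ is affinely independent, and this is the fact I would carry forward.

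From affine independence I would extract two consequences. (i) The vectors $\a_1, \dots, \a_k$ are linearly independent: a relation $\sum_j \lambda_j \a_j = 0$, dotted with $\u$, gives $\sum_j \lambda_j = \sum_j \lambda_j (\u\tr\a_j) = 0$, and then affine independence of $\left\{ \u, \a_1, \dots, \a_k \right\}$ forces all $\lambda_j = 0$; consequently $L$ is a nonempty affine subspace of dimension exactly $d - k$, with direction space $L_0 = \lin{\a_1, \dots, \a_k}^\bot$. (ii) $\u \notin \lin{\a_1, \dots, \a_k}$: if $\u = \sum_j c_j \a_j$, then $\|\u\| = 1$ gives $\sum_j c_j = \sum_j c_j (\a_j\tr\u) = 1$, so $\u$ would be an affine combination of $\a_1, \dots, \a_k$, contradicting affine independence. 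Now let $\p$ be the orthogonal projection of the origin onto $L$, so $\p \perp L_0$ and $\|\x\|^2 = \|\p\|^2 + \|\x - \p\|^2$ for every $\x \in L$; then $\Sph \cap L = \left\{ \x \in L \colon \|\x - \p\|^2 = 1 - \|\p\|^2 \right\}$, and $1 - \|\p\|^2 \geq 0$ because $\u \in \Sph \cap L$. If $1 - \|\p\|^2 = 0$, then $\Sph \cap L = \left\{ \u \right\}$ and, using $\p = \u \perp L_0$, every $\x \in L$ satisfies $\x\tr\u = 1$, i.e. $L \subseteq H_\u$; by (i) this forces $\u \in \lin{\a_1, \dots, \a_k}$, contradicting (ii). Hence $1 - \|\p\|^2 > 0$, and $\Sph \cap L$ is the intersection of the sphere of center $\p$ and radius $\sqrt{1 - \|\p\|^2} > 0$ with the $(d-k)$-dimensional affine subspace $L$, i.e. a $(d-k-1)$-sphere; here $0 \leq d-k-1 \leq d-1$ because $2 \leq k \leq d-1$.

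I expect the only genuine obstacle to be making the passage from the geometric hypothesis of Definition~\ref{definition:general position}, stated through $H_\u \cap \Y^*$, to the clean algebraic fact that $\left\{ \u, \a_1, \dots, \a_k \right\}$ is affinely independent inside $H_\u$; everything after that is elementary, and the identical argument covers anti-circles and mixed families of (anti-)circles with no change.
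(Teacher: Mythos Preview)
Your argument is correct and follows essentially the same line as the paper's own proof: pick $\u$ in the (assumed non-empty) intersection, observe that all the $\a_j$ lie in $H_{\u}$, invoke the general-position assumption inside $H_{\u}$ to obtain affine independence, and conclude that the affine subspace $L=\bigcap_j\{\x:\x\tr\a_j=1\}$ has dimension exactly $d-k$, so that $\Sph\cap L$ is a $(d-k-1)$-sphere.

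The differences are only in execution. The paper passes from affine independence of $\a_1,\dots,\a_k$ to ``general position of the hyperplanes $H_j$'' by an appeal to polytope duality; you instead keep $\u$ in the affinely independent set $\{\u,\a_1,\dots,\a_k\}$ and extract linear independence of the $\a_j$ by the elementary trick of dotting a dependence relation with $\u$. Your route is more self-contained. More notably, you explicitly exclude the degenerate case $\|\p\|=1$ (i.e.\ $L$ tangent to $\Sph$, so $\Sph\cap L$ a single point) via your observation~(ii) that $\u\notin\lin{\a_1,\dots,\a_k}$; the paper's proof does not isolate this step, stopping once $L$ is shown to be $(d-k)$-dimensional. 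Your extra care here is justified and makes the argument cleaner.
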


\begin{proof}
Denote by $\a_1, \dots, \a_{k}$ the data points from $\Y^*$ that correspond to the (anti-)circles, respectively. Suppose that the intersection of the (anti-)circles is non-empty, and take $\u \in \Sph$ in this intersection. By the definition of an (anti-)circle we then have that $H_{\u}$ contains all points $\a_1, \dots, \a_{k}$, and by our assumption of general position w.r.t. the unit sphere (Definition~\ref{definition:general position}) we thus know that the points $\a_1, \dots, \a_k$ lie in general position inside $H_{\u}$. Since $k \leq d-1$, these points must be affinely independent, which is equivalent with the general position of the hyperplanes\footnote{A system of $k \in \left\{1, \dots, d\right\}$ hyperplanes in $\R^d$ is said to be in general position if any collection of $k$ of them intersects in an $(d-k)$-dimensional affine space, $k\leq d$.} $H_j = \left\{ \x \in \R^d \colon \x\tr \a_j = 1 \right\}$. The last claim can be seen, e.g., using duality considerations for convex polytopes \cite[Section~2.4]{Schneider2014}. We also know by~\eqref{equation:cap} that the $j$-th (anti-)circle is the intersection of $H_j$ with $\Sph$. The intersection of all $k$ (anti-)circles can therefore be expressed as $\Sph \cap \left( \bigcap_{j=1}^k H_j \right)$, where $\bigcap_{j=1}^k H_j$ is a $(d-k)$-dimensional affine subspace of $\R^d$, as we wanted to show.
%
%
\end{proof}

\subsection{Finding representative points: Maximal tangent hyperplanes}    \label{section:maximal tangent hyperplanes}

Recall from Definition~\ref{definition:MTH} that $H_{\u}$ is a maximal tangent hyperplane if $\u\in\dirs$ and if there is no direction $\vv\in\Sph$ such that $H_{\u}\cap \Y^*$ is a strict subset of $H_{\vv}\cap \Y^*$. It is instructive to see this definition also in terms of the unit normal $\u \in \Sph$.  Let $\u \in \dirs$ lie in an intersection $C$ of $k$ (anti-)circles determined by points from $\Y$. This direction is maximal if and only if there is no additional $(k+1)$-st (anti-)circle that intersects $C$. We intend to show now that in~\eqref{equation:computation formula 2} it suffices to consider only unit normal vectors $\fdirs$ of all maximal tangent hyperplanes. 

We concluded in~\eqref{equation:computation formula 2} that we need to pick at least one point $\u$ from the spherical boundary of every shell. The following lemma shows that the collection of all directions $\u \in \dirs$ that correspond to maximal tangent hyperplanes has this property. 

\begin{lemma}   \label{lemma:MTH}
Let $S \subset \Sph$ be a spherical shell. Then there exists a maximal direction $\u \in S \cap \dirs$. 
\end{lemma}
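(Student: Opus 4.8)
The plan is to prove the slightly sharper statement that \emph{any} direction $\u^*$ lying on the spherical boundary of $S$ that maximises over this spherical boundary the quantity $k(\vv) := \#\left( H_{\vv}\cap \Y^* \right)$, $\vv\in\Sph$, is already a maximal direction; this implies the lemma at once, since such a $\u^*$ lies in $S\cap\dirs$ (the spherical boundary of a shell is part of it, and each of its points lies on some (anti-)circle). We may assume at least one (anti-)circle is present, otherwise there is nothing to prove. I would first record two facts: $k(\vv)$ equals the number of (anti-)circles through $\vv$, because by~\eqref{equation:circle} the point $\vv$ lies on the (anti-)circle of $\a\in\Y^*$ exactly when $\vv\in H_{\a}$, i.e.\ when $\a\in H_{\vv}$; and by Definition~\ref{definition:MTH}, $\u$ is a maximal direction precisely when $H_{\u}\cap\Y^*$ is not a proper subset of any $H_{\w}\cap\Y^*$ with $\w\in\Sph$, whereas the standing general-position assumption forces $k(\w)\le d-1$ for all $\w$. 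The spherical boundary of $S$ is nonempty and each of its points lies on some (anti-)circle, so $k\ge1$ there; fix $\u^*$ as above and put $A^* := H_{\u^*}\cap\Y^*$, $k^* := \#A^* = k(\u^*)$. If $k^* = d-1$ then $A^*$ cannot be a proper subset of any $H_{\w}\cap\Y^*$ (that would give $k(\w)\ge d$), so $\u^*$ is maximal; thus I may assume $k^*\le d-2$. Let $C^* := \Sph\cap\bigcap_{\a\in A^*}H_{\a}$ be the intersection of the $k^*$ (anti-)circles through $\u^*$. Since $\u^*\in C^*$, this set is nonempty, hence a $(d-k^*-1)$-sphere (by definition when $k^*=1$, by Lemma~\ref{lemma:intersection} when $k^*\ge2$), and it is connected because $d-k^*-1\ge1$. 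I also note that an (anti-)circle of $\a\in\Y^*$ contains all of $C^*$ if and only if $\a\in A^*$: if it contains $C^*\ni\u^*$, then $\a\in H_{\u^*}\cap\Y^* = A^*$.

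The crux is the claim $C^*\subseteq S$, which I would prove by contradiction. If $C^*\not\subseteq S$, then $T := C^*\cap S$ is a closed proper subset of the connected set $C^*$ which is nonempty (it contains $\u^*$), so it has a nonempty relative boundary within $C^*$; take $\w$ in this relative boundary. Then $\w\in T\subseteq S$, yet $\w$ is a limit of points of $C^*\setminus S\subseteq\Sph\setminus S$, so $\w$ cannot lie in the spherical interior of $S$ and therefore lies on the spherical boundary of $S$. On the other hand, inside the affine hull of $C^*$ the (anti-)circles from $A^*$ contain all of $C^*$ and so impose no constraint there, which means that the trace $C^*\cap S$ is cut out within $C^*$ by the (anti-)circles of points \emph{outside} $A^*$ only; consequently the relative-boundary point $\w$ lies on an (anti-)circle of some $\a\notin A^*$, giving $k(\w)\ge\#\left(A^*\cup\{\a\}\right) = k^*+1$. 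This contradicts the maximality of $k(\u^*)$ on the spherical boundary of $S$, and proves $C^*\subseteq S$.

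Finally, with $C^*\subseteq S$ in hand, suppose for contradiction that $\u^*$ is not maximal. Then $A^*$ is a proper subset of some $H_{\w}\cap\Y^*$, equivalently an (anti-)circle of some $\b\notin A^*$ meets $C^*$, say at $\vv\in C^*\subseteq S$. Since $\vv$ lies on an (anti-)circle it belongs to $\dirs$, and since the spherical interior of $S$ meets no (anti-)circle, $\vv$ lies on the spherical boundary of $S$; but $k(\vv)\ge\#\left(A^*\cup\{\b\}\right) = k^*+1 > k^*$, again contradicting the maximality of $k(\u^*)$. Hence $\u^*$ is a maximal direction lying on the spherical boundary of $S$, so $\u^*\in S\cap\dirs$, which proves the lemma. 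I expect the main obstacle to be the claim $C^*\subseteq S$ — specifically, justifying rigorously that the shell decomposition of $\Sph$, restricted to the lower-dimensional sphere $C^*$, is controlled solely by the (anti-)circles not in $A^*$, so that leaving $S$ while moving inside $C^*$ necessarily picks up an extra (anti-)circle; this is a routine but slightly delicate analysis of the local combinatorics of the (anti-)circle arrangement under the general-position assumption.
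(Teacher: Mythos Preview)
Your argument is correct and yields the lemma, but it proceeds along a different line than the paper's proof. The paper argues constructively: it starts from an arbitrary point $\vv$ on the spherical boundary of $S$ lying on $K$ (anti-)circles with intersection $C$, and if $\vv$ is not maximal it exhibits---via a circular arc inside $C$ together with a nested iteration over the defining halfspaces $H_\ell$ of $S$---another boundary point of $S$ lying on at least $K+1$ (anti-)circles; repeating this eventually produces a maximal direction. You instead go straight to the top: pick a boundary point $\u^*$ maximising $k(\cdot)$, show that the whole $(d-k^*-1)$-sphere $C^*$ must lie in $S$ (hence on the spherical boundary of $S$), and deduce maximality of $\u^*$ at once. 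Both proofs rest on the same geometric fact---moving within the intersection $C^*$ of the currently active (anti-)circles, one cannot leave $S$ without touching a new (anti-)circle---but your extremal formulation avoids the paper's double iteration and is somewhat cleaner; the paper's version, on the other hand, is constructive and closer in spirit to how the algorithm actually searches for maximal directions.

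The step you flag as the main obstacle is indeed routine. Since $C^*$ lies in the boundary hyperplane of every $H_j$ coming from $A^*$, those constraints are satisfied identically on $C^*$, so $C^*\cap S$ equals $C^*$ intersected with the remaining $H_j$'s only. A relative-boundary point $\w$ of $T=C^*\cap S$ within $C^*$ is a limit of points $\w_n\in C^*\setminus S$; each $\w_n$ violates some $H_{j_n}$, and by pigeonhole one index $\ell$ recurs infinitely often, forcing $\w\in\bd H_\ell$. This $\ell$ cannot come from $A^*$ (else $\w_n\in C^*\subseteq H_\ell$), so $\w$ lies on an extra (anti-)circle, giving $k(\w)\ge k^*+1$ as you need.
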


\begin{proof}
The shell $S$ is defined as an intersection of a finite number of spherical caps $S_i$, $-S_i$ from~\eqref{equation:cap}, and spherical segments $C_i$ from~\eqref{equation:segment}, $i\in \left\{ 1,\dots,n \right\}$. Each cap and segment can be represented as an intersection of the unit sphere $\Sph$ with one (for cap) or two (for segment) halfspaces in $\R^d$. In particular, the shell $S$ can be written in the form
    \[  S = \Sph \cap \left( \bigcap_{j=1}^J H_j \right)    \]
for some $J \geq 1$ and $H_1, \dots, H_J$ a collection of halfspaces in $\R^d$.

We start from any direction $\vv \in S \cap \dirs$ in the spherical boundary of $S$, meaning that $\vv$ must be contained in the boundary of some of the halfspaces $H_1, \dots, H_J$. Without loss of generality, suppose that
    \begin{equation}\label{equation:vv}
    \vv \in \Sph \cap \left( \bigcap_{j=1}^K \bd{H_j} \right) \cap \left( \bigcap_{j=K+1}^J \intr{H_j} \right)  
    \end{equation}
for $1 \leq K \leq J$. The $K$ elements $\Sph \cap \bd{H_j}$, $j=1,\dots,K$, correspond exactly to the (anti-)circles in $\Sph$ in which $\vv$ is contained; denote
    \begin{equation}\label{equation:C}
    C = \Sph \cap \left( \bigcap_{j=1}^K \bd{H_j} \right).  
    \end{equation}
If $\vv$ is not a maximal direction, there must exist some $(K+1)$-st (anti-)circle that intersects $C$ at a point $\u \in C$, $\u \ne \vv$. We show that a point $\u$ with this property can be found also in the spherical boundary of the shell $S$. Starting from any $\vv$ from the spherical boundary of $S$ that is not a maximal direction we can therefore get another direction $\u$ from the spherical boundary of $S$ that belongs to one more (anti-)circle than $\vv$. Iterating this search with $\vv$ replaced by $\u$ until no additional (anti-)circle intersects the set $C$ from~\eqref{equation:C}, we make sure that $\u$ from the last iteration corresponds to a maximal tangent hyperplane. That concludes our proof.

It remains to find a point in $S$ with the property above. If $\u \in S$, our assertion is trivially true. If $\u \notin S$, then there must exist at least one halfspace among $H_{K+1}, \dots, H_J$ in which $\u$ is not contained; let $\ell \in \left\{ K+1, \dots, J \right\}$ be the first index such that $\u \notin H_{\ell}$. The set $A = \bigcap_{j=1}^K \bd{H_j}$ is an affine subspace of $\R^d$. Since $\vv, \u \in C = \Sph \cap A$, the space $A$ has non-empty intersection $C$ with $\Sph$. In particular, as an intersection of an affine subspace and the unit sphere, $C$ must be a sphere inside the space $A$, and thus it is connected. Take any circular arc connecting the points $\vv$ and $\u$ inside $C$. Since we know by~\eqref{equation:vv} that $\vv \in \intr{H_{\ell}}$ and $\u \notin H_{\ell}$, there must be a point $\u'$ on this circular arc lying on the boundary of $H_{\ell}$. We have found a direction $\u' \in C \cap \bd{H_{\ell}}$. Now, if $\u' \in S$ we are done since we have $\u' \in S \cap \dirs$ that lies in the intersection of both $C$ and the additional (anti-)circle $\Sph \cap \bd{H_{\ell}}$. If $\u' \notin S$, we iterate our procedure with $\u$ replaced by $\u'$. This time we know that our new choice of $\u = \u'$ lies inside $C$ defined by~\eqref{equation:C} and the (anti-)circle given by $\bd{H_{\ell}}$. But, since $\u \notin S$ it must be that, there is another halfspace from $H_{\ell+1}, \dots, H_{J}$ not containing $\u$. Let $\ell' \in \left\{ \ell + 1, \dots,  J \right\}$ be the first index such that $\u\notin H_{\ell'}$. We iterate our procedure until we exhaust all indices $\ell$ such that $\u \notin H_{\ell}$, and finally find a point $\u \in S$ that lies in the intersection of $C$ and an additional (anti-)circle as required. 
\end{proof}

Lemma~\ref{lemma:MTH} reduces the directions used in~\eqref{computation formula} from all (anti-)circles $\u \in \dirs$ to its subset of maximal directions. That lemma, however, applies also to the finer result~\eqref{equation:computation formula 2} with finitely many elements in the minimum. Suppose that $H_{\u}$ is a maximal tangent hyperplane and denote $\{\a_1,\dots, \a_k\}=H_{\u}\cap \Y^*$ the maximal subset of $k$ data points of $\Y$ contained in the boundary of the slab $\slab{\u}$. Due to the general position of $\Y$ w.r.t. $\Sph$ from Definition~\ref{definition:general position}, we know that $\a_1,\dots,\a_k$ must be in general position and also $k\leq d-1$. We distinguish two cases:
    \begin{enumerate}[align=left]
    \item[\textbf{Case $k=d-1$.}] By Lemma~\ref{lemma:intersection}, the (non-empty) intersection of all $d-1$ (anti-)circles is a $0$-sphere inside a one-dimensional line intersecting $\Sph$. In other words, there are exactly two intersection directions $\u$ and $\u'$ in $\Sph$ corresponding to two distinct maximal tangent hyperplanes $H_{\u}$ and $H_{\u'}$.
    \item[\textbf{Case $k<d-1$.}] Applying Lemma~\ref{lemma:intersection} again, we have that the intersection of these $k$ (anti-)circles is a $(d - k - 1)$-sphere $C$ inside $\Sph$, with $d - k - 1 \geq 1$. Because of the definition of a maximal tangent hyperplane, no additional (anti-)circle intersects $C$. For any direction $\u \in C$ we therefore have that the slabs $\slab{\u}$ contain exactly the same points from $\Y$ in their interior, boundary, and complement, respectively. For any $\u\in C$ it thus holds that the function $f_{\epsilon}$ from~\eqref{comp} is constant over $\u \in C$ if $\epsilon > 0$ is small enough. Thus, in formulas~\eqref{computation formula} and~\eqref{equation:computation formula 2} it is enough to pick an arbitrary single direction $\u \in C$ and plug it into~\eqref{computation formula} as a representative of $C \subset \dirs$.
    \end{enumerate}

Combining the results obtained throughout this section we complete the proof of Theorem~\ref{theorem:main}.

\section{Proof of Lemma~\ref{lemma:intersection1}}  \label{section:proof intersection1}

There exists a direction $\u \in \Sph$ in the intersection of the $k$ circles if and only if the tangent hyperplane $H_{\u}$ of $\Sph$ that passes through $\u$ contains all the $k$ data points corresponding to the circles. That follows directly from the definition of a circle. The hyperplane $H_{\u}$ then necessarily contains the affine hull $A$ of these points, and because $H_{\u}$ does not intersect $B$, neither does the affine hull $A$. For the other implication, should the affine hull $A$ intersect the open ball $B$, then necessarily any hyperplane that contains $A$ intersects $B$, and thus it cannot be tangent to $\Sph$.

\section{Proof of Lemma~\ref{lemma:2}}  \label{section:proof lemma2}

Because $\bm 0_d \in C$, certainly $A \subset A'$, and if $A \cap B \ni \a$, then also $\a \in A' \cap B$. On the other hand, suppose that $A \cap B = \emptyset$, meaning that $\left\Vert \a \right\Vert \geq 1$ for all $\a \in A$. Then for any $\a \in A$ and $\c \in C$ we have by the orthogonality of all $\z_j^*$, $j=1,\dots, k$, with $\c$ that $\left\Vert \a + \c \right\Vert^2 = \left\Vert \a \right\Vert^2 + \left\Vert \c \right\Vert^2 \geq \left\Vert \a \right\Vert^2 > 1$. This inequality follows using Pythagoras' theorem because $\a$, as a linear combination of $\{\z_j^*\}_{j=1}^k$, is orthogonal to the vector $\c$. Necessarily, no point of $A'$ can intersect $B$ as we wanted to show.  

\vskip 0.1 in
\section*{Acknowledgments}
\paragraph{}
\vskip 0.1 in

The research is supported by NSF of China (Grant No.11971208, 11601197), China Postdoctoral Science Foundation funded project (2016M600511, 2017T100475), NSF of Jiangxi Province (No.2018ACB21002, 20171ACB21030), Student research project of Jiangxi University of Finance and Economics (No. 20200613104157285). P.~Laketa was supported by the OP RDE project ``International mobility of research, technical and administrative staff at the Charles University", grant CZ.02.2.69/0.0/0.0/18\_053/0016976. The work of S.~Nagy was supported by Czech Science Foundation (EXPRO project n. 19-28231X). 


\def\cprime{$'$} \def\polhk#1{\setbox0=\hbox{#1}{\ooalign{\hidewidth
  \lower1.5ex\hbox{`}\hidewidth\crcr\unhbox0}}}

\end{document}